\documentclass[nolineno]{biometrika}
 \usepackage{stmaryrd}

\usepackage{mathrsfs,amsfonts,amssymb}
\usepackage[dvipsnames]{xcolor}
\usepackage{tabularx,caption,tabulary,graphicx,multirow,booktabs,relsize, fancyhdr} 

\usepackage{adjustbox}
\usepackage{amsmath}
\usepackage{epsfig}
\usepackage{graphicx}
\usepackage{subfigure}
\usepackage{times}
\usepackage{bm}
\usepackage{natbib}
 \usepackage{color}

\def\T{{ \mathrm{\scriptscriptstyle T} }}

\def\pr{{\mathrm{pr}}}

\def\part{\partial}

\newcommand{\cv}{\mathrm{cv}}
\newcommand{\vecd}{\mathrm{vec}}
\newcommand{\diagn}{\mathrm{diag}}
\newcommand{\tr}{\mbox{tr}}

\def\JRSSB{{\sl Journal of the Royal Statistical Society}, {\bf B}}

\def\JASA{{\sl Journal of the American Statistical Association}}

\def\AS{{\sl The Annals of Statistics}}

\def\ET{{\sl Econometric Theory}}
\def\JOE{{\sl Journal of Econometrics}}

\begin{document}

\jname{Biometrika}
\jyear{} \jvol{} \jnum{}
\accessdate{Advance Access publication on } \copyrightinfo{\Copyright\ 2010 Biometrika Trust\goodbreak {\em Printed in Great Britain}}


\markboth{J. Chang, Q. Yao \and W. Zhou}{Testing for vector white noise}

\title{Testing for high-dimensional white noise using maximum cross correlations}

\author{JINYUAN CHANG }
\affil{School of Statistics, Southwestern University of Finance and Economics, Chengdu,
Sichuan 611130, China \email{changjinyuan@swufe.edu.cn} }

\author{QIWEI YAO}
\affil{Department of Statistics, London School of Economics and Political Science, WC2A 2AE, UK \email{q.yao@lse.ac.uk}}

\author{\and WEN ZHOU}
\affil{Department of Statistics, Colorado State University, Fort Collins, Colorado 80523, USA \email{riczw@stat.colostate.edu}}

\maketitle

\begin{abstract}
We propose a new omnibus test for vector white noise using the maximum
absolute autocorrelations and cross-correlations of the component
series. Based on an approximation by the $L_\infty$-norm  of a normal
random vector, the critical value of the test can be evaluated by
bootstrapping from a multivariate normal distribution. In contrast to the
conventional white noise test, the new method is proved to be valid for
testing the departure from white noise that is not independent and
identically distributed. We illustrate the accuracy and the power of the
proposed test by simulation, which also shows that the new test
outperforms several commonly used methods including, for example, the
Lagrange multiplier test and the multivariate Box--Pierce portmanteau
tests, especially when the dimension of time series is high in relation
to the sample size.  The numerical results also indicate that the
performance of the new test can be further enhanced when it is applied
to pre-transformed data obtained via the time series  principal component
analysis proposed  by Chang, Guo and
Yao (arXiv:1410.2323). The proposed procedures
have been implemented in an \texttt{R} package.
\end{abstract}

\begin{keywords}
Autocorrelation;
 Normal approximation; Parametric bootstrap;
Portmanteau test;
Time series principal component analysis;
Vector white noise.
\end{keywords}

\section{Introduction}

Testing for white noise or serial correlation is a fundamental problem in
statistical inference, as many testing problems in linear modelling can
be transformed into a white noise test.
Testing for white noise is often pursued in two different manners: (i) the
departure from white noise is specified as an alternative hypothesis in
the form of an explicit parametric family such as an autoregressive
moving average model, and (ii) the alternative hypothesis is unspecified.
With an explicitly specified alternative, a likelihood ratio test can be
applied. Likelihood-based tests typically have more power to detect a
specific form of the departure 
than omnibus
tests which try to detect arbitrary departure from white noise. The
likelihood approach has been taken further in the nonparametric context using
the generalized likelihood ratio test initiated by \cite{FZZ01}; see
Section 7.4.2 of \cite{FY03} and also \cite{FZ04}. Nevertheless many
applications including model diagnosis do not lead to a natural
alternative model. Therefore various omnibus tests, especially the
celebrated Box--Pierce test and its variants, remain popular. Those
portmanteau tests are proved to be asymptotically $\chi^2$-distributed
under the null hypothesis, which makes their application extremely easy.
 See Section 3.1 of \cite{Li_2004} and Section 4.4 of
\cite{Lutkepohl_2005} for further information on those portmanteau tests.

While portmanteau tests are designed for testing
 white noise, their asymptotic $\chi^2$-distributions are established
under the assumption that observations under the null hypothesis are
independent and identically distributed.
However, empirical evidence, including that in Section~\ref{se:numer}
below, suggests that this may represent another case in which the theory
is more restrictive than the method itself. Asymptotic theory of  portmanteau tests for white noise that is not independent and identically distributed has
attracted a lot of attention. One of the most popular approaches is to
establish the asymptotic normality of a normalized portmanteau test
statistic. An incomplete list in this endeavour includes
\cite{Durlauf91}, \cite{RT96}, \cite{Deo00}, \cite{Lo2001}, \cite{FRZ05}, \cite{EL09} and
\cite{Shao11}. However, the convergence is typically
slow.
\cite{Horowitz06} proposed a double blockwise
bootstrap method to test for white noise that is not independent and identically distributed.

In this paper we propose a new omnibus test for vector white noise.
Instead of using a portmanteau-type statistic, the new test is based on
the maximum absolute auto- and cross-correlation of all component time
series. This avoids the impact of small correlations.
When most
auto- and cross-correlations are small, the
Box--Pierce tests have too many
degrees of freedom in their asymptotic distributions.
In contrast the new test performs well when there is at least one large
absolute auto- or cross-correlation at a non-zero lag.
The null distribution of the maximum correlation test statistic can be
approximated asymptotically by that of $|G|_\infty$, where $G$ is a
Gaussian random vector, and $|u|_\infty = \max_{1\leq i\leq s} |u_i|$
denotes the $L_\infty$-norm of a vector $u = (u_1, \ldots, u_s)^\T$. Its critical
values can therefore be evaluated by bootstrapping from a
multivariate normal distribution.

An added advantage of the new test is its ability to handle high-dimensional
series, in the sense that the number of series is as large as, or
even larger than, their length. Nowadays, it is common to model and forecast many time series at once, which
has direct applications in, among others,  finance, economics, environmental and
medical studies. The current literature on high-dimensional
time series  focuses on estimation
and dimension-reduction aspects.
See, for example, \cite{BM15}, and \cite{GWY16}
and the references within for high-dimensional vector autoregressive models,
and \cite{BN02}, \cite{Fornietal_2005},
\cite{LY12} and \cite{ChangGuoYao_2015}
for high-dimensional time series factor models.
The model diagnostics
has largely been untouched, as far as we are aware. The test proposed in this paper represents an effort to fill in this gap.

We compare the performance of the new test with those of the three Box--Pierce
types of portmanteau tests, the Lagrange multiplier test and a likelihood
ratio test in simulation, which shows that the new test attains
the nominal significance levels more accurately and is also more powerful when the
dimension of time series is large or moderately large. Its performance can be further enhanced by first applying time series principal component analysis, proposed by
 Chang, Guo and Yao (arXiv:1410.2323).

Let $\otimes$ and $\vecd$ denote, respectively, the Kronecker product and the
vectorization for matrices, $I_{s}$ be the $s\times s$ identity
matrix, and $|A|_\infty=\max_{1\leq i\leq
\ell,1\leq j\leq m}|a_{ij}|$ for an $\ell \times m$ matrix $A\equiv (a_{i,j})$.
Denote by $\lceil x\rceil$ and $\lfloor x\rfloor$, respectively, the
smallest integer not less than $x$ and the largest integer not greater than $x$.

\section{Methodology}

\label{method}

\subsection{Tests}\label{se:test}
\label{sec21}
 Let $\{\varepsilon_t\}$ be a $p$-dimensional weakly stationary time
series with mean zero. Denote by
$  \Sigma(k)=\textrm{cov}(\varepsilon_{t+k},\varepsilon_t)$ and $
\Gamma(k) = \textrm{diag}\{{\Sigma}(0)\}^{-1/2}
{\Sigma}(k)\textrm{diag}\{{\Sigma}(0)\}^{-1/2},
$
respectively,
the autocovariance and the autocorrelation of $\varepsilon_t$ at lag $k$,
where $\textrm{diag}(\Sigma)$ denotes the diagonal matrix consisting of
the diagonal elements of $\Sigma$ only. When $\Sigma(k)\equiv 0$ for all $k\ne 0$,
$\{\varepsilon_t\}$ is white noise.

 With the available observations
$\varepsilon_1, \ldots, \varepsilon_n$, let
\begin{equation}\label{eq:autocorr}
\widehat{\Gamma}(k) \equiv \{\widehat{\rho}_{ij}(k)\}_{1\leq i,j\leq p} =\textrm{diag}\{\widehat{\Sigma}(0)\}^{-1/2} \widehat{\Sigma}(k)\textrm{diag}\{\widehat{\Sigma}(0)\}^{-1/2}
\end{equation}
be the sample autocorrelation matrix at lag $k$, where
\begin{equation}\label{eq:auto}
\widehat{\Sigma}(k)=\frac{1}{n}\sum_{t=1}^{n-k}\varepsilon_{t+k}\varepsilon_{t}^\T
\end{equation}
is the sample autocovariance matrix.

Consider the hypothesis testing problem
\begin{equation}\label{eq:test}
H_0:\{\varepsilon_t\}~\mbox{is white noise}~~~\textrm{versus}~~~H_1:\{\varepsilon_t\}~\mbox{is not white noise}.
\end{equation}
Since $\Gamma(k) \equiv 0$ for any $k\geq1$ under $H_0$, our test statistic $T_n$ is defined as
\begin{equation} \label{Tn}
T_n=\max_{1\leq k\leq K}T_{n,k},
\end{equation}
where $T_{n,k}=\max_{1\leq i, j\leq p}{n}^{1/2}|\widehat{\rho}_{ij}(k)|$ and $K\ge 1$ is a prescribed integer. We reject $H_0$ if $T_n>\textrm{cv}_\alpha$, where $\cv_\alpha >0$ is the critical value determined by
\begin{equation}\label{eq:cv}
\pr(T_n> \cv_\alpha) = \alpha
\end{equation}
under $H_0$, and $\alpha \in(0,1)$ is the significance level of the test.

To determine $\cv_\alpha$, we need to derive the distribution of $T_n$
under $H_0$.
 Proposition \ref{tm:1} below shows that the Kolmogorov
distance between this distribution and that of the $L_\infty$-norm of a
$N(0, \Xi_n)$ random vector
 converges to zero,
even when $p$ diverges at an exponential rate of $n$, where
\begin{equation}\label{eq:XI}
\Xi_n= (I_K\otimes W){E}(\xi_n\xi_n^\T)(I_K\otimes W),
\end{equation}
$$\xi_n={n}^{1/2}(\vecd\{\widehat{\Sigma}(1)\}^\T,\ldots,
\vecd\{\widehat{\Sigma}(K)\}^\T)^\T, \;\;\;
W=\diagn\{\Sigma(0)\}^{-1/2}\otimes\diagn\{\Sigma(0)\}^{-1/2}.$$
This paves the way to evaluate $\cv_\alpha$ simply by drawing a bootstrap
sample from $N(0, \widehat{\Xi}_n)$, where $\widehat{\Xi}_n$ is an
appropriate estimator for $\Xi_n$.

\begin{proposition}\label{tm:1}
Let Conditions {\rm\ref{as:var}--\ref{as:cov}} in Section
{\rm\ref{se:theory}} below hold and $G\sim N(0,\Xi_n)$. 
There exists a positive constant $\delta_1$ depending only on the
constants appeared in Conditions {\rm\ref{as:var}--\ref{as:cov}} for
which $\log p\leq Cn^{\delta_1}$ for some constant $C>0$. Then it holds under $H_0$ that
\[
\sup_{s\geq 0}\big|\pr(T_n>s)-\pr(|G|_\infty>s)\big|\rightarrow0, \qquad {\rm as} \;
n\rightarrow\infty.
\]
\end{proposition}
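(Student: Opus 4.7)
The plan is to recast $T_n$ as the $L_\infty$-norm of a high-dimensional vector of normalized sums, and then invoke a Gaussian approximation result of Chernozhukov--Chetverikov--Kato (CCK) type adapted to dependent data.

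First I would handle the studentization. Writing $\hat\rho_{ij}(k) = \hat\Sigma_{ij}(k)/\{\hat\Sigma_{ii}(0)\hat\Sigma_{jj}(0)\}^{1/2}$ and $\tilde\rho_{ij}(k) = \hat\Sigma_{ij}(k)/\{\Sigma_{ii}(0)\Sigma_{jj}(0)\}^{1/2}$, I would show that $\tilde T_n := \max_{1\le k\le K}\max_{1\le i,j\le p} n^{1/2}|\tilde\rho_{ij}(k)|$ satisfies $|T_n - \tilde T_n| = o_\pr(a_n)$ for a sequence $a_n \to 0$ fast enough to preserve the stated Kolmogorov bound. This amounts to a uniform concentration for $\max_{1\le i\le p}|\hat\Sigma_{ii}(0) - \Sigma_{ii}(0)|$, obtainable from a Bernstein or Fuk--Nagaev inequality for weakly dependent sequences combined with a union bound; Conditions \ref{as:var}--\ref{as:cov} are designed precisely to provide the moment and mixing controls needed here, and they yield a rate compatible with $\log p \le Cn^{\delta_1}$.

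Next, observe that under $H_0$ we have $E(\varepsilon_{t+k,i}\varepsilon_{t,j}) = 0$ for all $k\ge 1$, so $\tilde T_n$ equals $|S_n|_\infty$, where $S_n$ is the $p^2K$-dimensional vector with entries $n^{-1/2}\sum_{t=1}^{n-k}\tilde\varepsilon_{t+k,i}\tilde\varepsilon_{t,j}$ and $\tilde\varepsilon_{t,i} = \varepsilon_{t,i}/\{\Sigma_{ii}(0)\}^{1/2}$. A direct calculation shows $\cov(S_n) = \Xi_n$ up to an asymptotically negligible boundary term. I would then apply a Gaussian approximation theorem for maxima of sums of mean-zero high-dimensional dependent random vectors (e.g., via large-small block arguments or the CCK-type results extended to $\alpha$-mixing or $\tau$-mixing sequences) to the vector $S_n$, concluding
\[
\sup_{s\ge 0}\bigl|\pr(|S_n|_\infty > s) - \pr(|G|_\infty > s)\bigr| \to 0,
\]
provided the polynomial-in-$n$ bound on $\log(p^2 K) = 2\log p + \log K$ holds, which dictates the choice of $\delta_1$.

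Finally, I would combine the studentization error bound from the first step with the above Gaussian approximation via the Gaussian anti-concentration inequality (which controls $\sup_{s\ge 0}\pr(\bigl||G|_\infty - s\bigr| \le a_n)$) to convert the $o_\pr(a_n)$ approximation into a Kolmogorov distance bound, and push the error to zero. The main obstacle I anticipate is Step two, namely the Gaussian approximation for $|S_n|_\infty$ when the summands $\tilde\varepsilon_{t+k,i}\tilde\varepsilon_{t,j}$ exhibit nontrivial serial dependence across $t$ even though $\{\varepsilon_t\}$ is white noise; carefully verifying the moment and dependence conditions of a CCK-type theorem in this $p^2K$-dimensional setting, and confirming that the permissible growth rate on the dimension translates into the stated $\log p \le Cn^{\delta_1}$ requirement, is where the delicate analysis lies.
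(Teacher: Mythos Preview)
Your proposal is correct and mirrors the paper's own proof almost exactly: the paper too first replaces $\widehat W$ by $W$ (your studentization step, handled in Lemmas A1--A3 via Bernstein-type tail bounds under the $\beta$-mixing Condition~\ref{as:betam}), then applies a CCK-type Gaussian approximation for dependent data to the resulting normalized sums (Lemma A4, which defers the core large-small-block argument to an external reference), and finally glues the two steps via the Gaussian anti-concentration inequality. The only structural difference is that the paper works with the one-sided maxima $\widehat Z,\,Z,\,V$ and separately reduces the two-sided $|\cdot|_\infty$ statement to the one-sided one, whereas you propose to work directly with the $L_\infty$-norm throughout; this is a cosmetic distinction and your route is equally valid.
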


By replacing $\Xi_n$ in \eqref{eq:XI} by $\widehat{\Xi}_n$, where $\widehat{\Xi}_n$ is defined in Section \ref{se:est} below, the critical value $\cv_\alpha$ in (\ref{eq:cv}) can be replaced by $\widehat{\cv}_\alpha$ which is determined by
\begin{equation} \label{eq:hatcv}
\pr(|G|_\infty > \widehat{\cv}_\alpha) = \alpha, 
\end{equation}
where $G \sim N(0,\widehat{\Xi}_n)$. In practice, we can draw
$G_1,\ldots, G_{B}$ independently from $N(0,\widehat{\Xi}_n)$ for a large
integer $B$.
The $\lfloor B\alpha\rfloor$-th largest value among $|G_1|_\infty,\ldots, |G_{B}|_\infty$ is taken as the critical value $ \widehat{\cv}_\alpha$. We then reject $H_0$ whenever $T_n > \widehat{\cv}_\alpha$.

\begin{remark} \label{remark1}
When $p$ is large or moderately large, it is advantageous to apply the
time series principal component analysis
proposed  in 
arXiv:1410.2323 to the data first. 
We denote by $T_n^*$ the resulted test. More
precisely, we compute an invertible transformation matrix $Q$  using the {\tt R} function
{\tt segmentTS}  in the package {\tt PCA4TS} available at {\tt CRAN}.
Then $T_n^*$ is defined in the same manner as $T_n$ in (\ref{Tn}) with
$\{\varepsilon_1, \ldots, \varepsilon_n\}$ replaced by
$\{\varepsilon_1^*, \ldots, \varepsilon_n^*\}$, where $\varepsilon_t^*
= Q \varepsilon_t$.
As $Q$ does not depend on $t$,
$\{\varepsilon_t,\, t\ge 1\}$ is white noise if and only
if $\{\varepsilon_t^*, \, t\ge 1 \}$
is white noise. The time series principal
component analysis makes the component autocorrelations
as large as possible by suppressing the
cross-correlations among different components at all time lags.
This makes the maximum correlation
greater, and therefore the test more powerful. See also
the simulation results in Section~\ref{se:numer}.
\end{remark}

\subsection{Estimation of $\Xi_n$}\label{se:est}

By Lemma 3.1 of \cite{CCK_2013}, the proposed test in Section \ref{se:test} is valid if the estimator $\widehat{\Xi}_n$ satisfies $|\widehat{\Xi}_n-\Xi_n|_\infty=o_p(1)$. We construct such an estimator now even when the dimension of time series is ultra-high, i.e. $p\gg n$. Let $\tilde{n}=n-K$ and
\begin{equation} \label{fft}
f_t=\{\vecd(\varepsilon_{t+1}\varepsilon_t^\T),\ldots,\vecd(\varepsilon_{t+K}\varepsilon_t^\T)\}^\T~~~(t=1,\ldots,\tilde{n}).
\end{equation}
The second factor $E(\xi_n\xi_n^\T)$ on the right-hand side of (\ref{eq:XI}) is closely related to $\textrm{var}(\tilde{n}^{-1/2}\sum_{t=1}^{\tilde{n}}f_t)$, the long-run covariance of $\{f_t\}_{t=1}^{\tilde{n}}$. The long-run covariance plays an important role in the inference with dependent data. There exist  various estimation methods for long-run covariances, including the kernel-type estimators \citep{A91}, and the estimators utilizing the moving block bootstraps \citep{L03}. See also   \cite{DL97} and \cite{KVB00}. We adopt the kernel-type estimator for the long-run covariance of $\{f_t\}_{t=1}^{\tilde{n}}$ 
 \begin{equation}\label{ae}
 \widehat{J}_n = \sum_{j=-\tilde{n}+1}^{\tilde{n}-1} \mathcal{K}\left(\frac{j}{b_n}\right) \widehat{H}(j),
 \end{equation}
 where $\widehat{H}(j)=\tilde{n}^{-1}\sum_{t=j+1}^{\tilde{n}} f_t f_{t-j}^\T$ if $j\geq 0$ and $\widehat{H}(j)=\tilde{n}^{-1}\sum_{t=-j+1}^{\tilde{n}} f_{t+j}f_{t}^\T$ otherwise, $\mathcal{K}(\cdot)$ is a symmetric kernel function that is continuous at $0$ with $\mathcal{K}(0)=1$, and $b_n$ is the bandwidth diverging with $n$. Among a variety of kernel functions that guarantee the positive definiteness of the long-run covariance estimators, \cite{A91} derived an optimal kernel, i.e. the quadratic spectral kernel
\begin{equation}\label{qs}
\mathcal{K}_{QS}(x)=\frac{25}{12\pi^2x^2}\left\{\frac{\sin (6\pi x/5)}{6\pi x/5}-\cos(6\pi x/5)\right\}
\end{equation}
by minimizing the asymptotic truncated mean square error of the
estimator. For the numerical study in Section \ref{se:numer}, we always
use this kernel function with an explicitly specified bandwidth selection
procedure. The theoretical results in Section \ref{se:theory} apply to
general kernel functions. As now $\widehat{J}_n$ in (\ref{ae}) provides
an estimator for $E(\xi_n\xi_n^\T)$, $\Xi_n$ in (\ref{eq:XI}) can be estimated by
\begin{equation*} \label{eq:hatXI}
\widehat{\Xi}_n=(I_K\otimes \widehat{W})\widehat{J}_n(I_K\otimes \widehat{W}),
\end{equation*}
where $\widehat{W}=\diagn\{\widehat{\Sigma}(0)\}^{-1/2} \otimes\diagn\{\widehat{\Sigma}(0)\}^{-1/2}$ for $\widehat{\Sigma}(0)$ defined in (\ref{eq:auto}). Simulation results show that the proposed test with this estimator performs very well.

\subsection{Computational issues}\label{se:genera}

To draw a random vector $G\sim N(0,\widehat{\Xi}_n)$, the standard
approach consists of three steps: (i) perform the Cholesky decomposition
for the $p^2K\times p^2K$ matrix $\widehat{\Xi}_n =L^\T L$, (ii) generate
$p^2K$ independent $N(0,1)$ random variables $z=(z_1, \ldots,
z_{p^2K})^\T$,  (iii) perform transformation $G = L^\T z$. Computationally this
is an $(np^4K^2+p^6K^3)$-hard problem requiring a large storage space
for $\{f_t\}_{t=1}^{\tilde
n}$ and  matrix $\widehat{\Xi}_n$.
To circumvent the high computing cost with large $p$ and/or $K$,
we propose a method below which requires to generate random variables from an
$\tilde{n}$-variate normal distribution instead.

Let $\Theta$ be an $\tilde{n}\times \tilde{n}$ matrix with the $(i,j)$-th
element $\mathcal{K}\{(i-j)/b_n\}$. Let
$\eta=(\eta_1,\ldots,\eta_{\tilde{n}})^\T\sim N(0, \Theta)$ be a random
vector independent of
$\{\varepsilon_1,\ldots,\varepsilon_n\}$. Then it is easy
to see that conditionally on $\{\varepsilon_1,\ldots,\varepsilon_n\}$,
\begin{equation} \label{nn}
G=(I_K\otimes \widehat{W})\left(\frac{1}{\surd\tilde{n}}\sum_{t=1}^{\tilde{n}} \eta_t f_t\right)\sim N(0,\widehat{\Xi}_n).
\end{equation}
Thus a random sample from $N(0,\widehat{\Xi}_n)$ can be obtained from a random sample
from $N(0, \Theta)$ via \eqref{nn}. The computational complexity of the new method is only $O(n^3)$, independent of $p$ and $K$. The required storage space is also much smaller.

\section{Theoretical properties}\label{se:theory}

Write $\varepsilon_t=(\varepsilon_{1,t},\ldots,\varepsilon_{p,t})^\T$ for each $t=1,\ldots,n$. To investigate the theoretical properties of the proposed testing procedure, we need the following regularity conditions.
\begin{condition}\label{as:var}
There exists a constant $C_1>0$ independent of $p$ such that $\textrm{var}(\varepsilon_{i,t})\geq C_1$ uniformly holds for any $i=1,\ldots,p$.
\end{condition}

\begin{condition}\label{as:tail}
There exist three constants $C_2, C_3>0$ and $r_1\in(0,2]$ independent of $p$ such that
$
\sup_t\sup_{1\leq i\leq p}\pr(|\varepsilon_{i,t}|>x)\leq C_2\exp(-C_3x^{r_1})
$
for any $x>0$.
\end{condition}

\begin{condition}\label{as:betam}
Assume that $\{\varepsilon_t\}$ is $\beta$-mixing in the sense that
$
\beta_k \equiv\sup_t E \{\sup_{B\in\mathcal{F}_{t+k}^\infty} \big|\pr(B\mid\mathcal{F}_{-\infty}^t)-\pr(B)\big|\} \to 0
$
as $k \to \infty$, where $\mathcal{F}_{-\infty}^u$ and $\mathcal{F}_{u+k}^\infty$ are the $\sigma$-fields generated respectively by $\{\varepsilon_t\}_{t\leq u}$ and $\{\varepsilon_t\}_{t\geq u+k}$. Furthermore there exist two constants $C_4>0$ and $r_2\in(0,1]$ independent of $p$ such that $ \beta_k\leq \exp(-C_4k^{r_2}) $
 for all $k\geq1$.
\end{condition}

\begin{condition}\label{as:cov}
There exists a constant $C_5>0$ and $\iota>0$ independent of $p$ such that
\[
\begin{split}
C_5^{-1}<&~\liminf_{q\rightarrow\infty}\inf_{m\geq 0} E\bigg(\bigg|\frac{1}{q^{1/2}}\sum_{t=m+1}^{m+q}\varepsilon_{i,t+k}\varepsilon_{j,t}\bigg|^{2+\iota}\bigg)\\
\leq&~ \limsup_{q\rightarrow\infty}\sup_{m\geq 0}E\bigg(\bigg|\frac{1}{q^{1/2}}\sum_{t=m+1}^{m+q}\varepsilon_{i,t+k}\varepsilon_{j,t}
\bigg|^{2+\iota}\bigg)<C_5,~~(i,j=1,\ldots,p; k=1,\ldots,K).
\end{split}
\]
\end{condition}

Condition \ref{as:var} ensures that all component series are not
degenerate. Condition \ref{as:tail} is a common assumption in the
literature on ultra high-dimensional data analysis. It ensures
exponential-type upper bounds for the tail probabilities of the
statistics concerned. The $\beta$-mixing assumption in Condition
\ref{as:betam} is mild. Causal autoregressive moving average processes with continuous innovation
distributions are $\beta$-mixing with exponentially decaying $\beta_k$.
So are the stationary Markov chains satisfying certain conditions. See
Section 2.6.1 of \cite{FY03} and the references within. In fact
stationary generalized autoregressive conditional heteroskedasticity models with finite second moments and continuous
innovation distributions are also $\beta$-mixing with exponentially
decaying $\beta_k$; see Proposition 12 of \cite{CC02}. If we only require
$\sup_t\sup_{1\leq i\leq
p}\pr(|\varepsilon_{i,t}|>x)=O\{x^{-2(\nu+\epsilon)}\}$ for any $x>0$ in
Condition \ref{as:tail} and $\beta_k=O\{k^{-\nu(\nu+\epsilon)/(2\epsilon)}\}$ in
Condition \ref{as:betam} for some $\nu>2$ and $\epsilon>0$, we can apply
Fuk--Nagaev type inequalities to construct the upper bounds for the
tail probabilities of the statistics for which our testing procedure
still works for $p$ diverging at some polynomial rate of $n$. We refer to
Section 3.2 of
arXiv:1410.2323 for the implementation of
Fuk--Nagaev type inequalities in such a scenario. The
$\beta$-mixing condition 
can be replaced by the
$\alpha$-mixing condition under which 
we can justify the proposed method for $p$
diverging at some polynomial rate of $n$ by using Fuk--Nagaev type
inequalities. However, it remains open to establish the relevant
properties under $\alpha$-mixing for $p$ diverging
at some exponential rate of $n$.
Condition \ref{as:cov} is a
technical assumption for the validity of the Gaussian approximation for dependent data.

Our main asymptotic results indicate that the critical value  $\widehat{\cv}_\alpha$ defined in (\ref{eq:hatcv}) by the normal approximation is asymptotically valid, and, furthermore, the proposed test is consistent.

\begin{theorem}\label{tm:2}
Let Conditions {\rm\ref{as:var}--\ref{as:cov}} hold,
$|\mathcal{K}(x)|\asymp |x|^{-\tau}$ as
$|x|\rightarrow\infty$ for some $\tau>1$, and $b_n\asymp
n^{\rho}$ for some $0<\rho<\min\{(\tau-1)/(3\tau),r_2/(2r_2+1)\}$.
Let $\log
p\leq Cn^{\delta_2}$ for some positive constants $ \delta$ and $C$ depending
on the constants in Conditions {\rm\ref{as:var}--\ref{as:cov}} only.
Then it holds  under $H_0$
that 
\[
\pr(T_n>\widehat{\cv}_\alpha)\rightarrow\alpha, \quad  \; n \to \infty.
\]
\end{theorem}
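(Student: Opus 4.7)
The plan is to combine Proposition~\ref{tm:1} with a Gaussian comparison argument in the Chernozhukov--Chetverikov--Kato framework. By Proposition~\ref{tm:1}, under $H_0$ the law of $T_n$ is close in Kolmogorov distance to that of $|G|_\infty$ with $G\sim N(0,\Xi_n)$, so if $\cv_\alpha$ denotes the upper-$\alpha$ quantile of $|G|_\infty$, we automatically have $\pr(T_n>\cv_\alpha)\to\alpha$. Hence it suffices to prove $\widehat{\cv}_\alpha-\cv_\alpha=o_p(1)$, and by Lemma~3.1 of \cite{CCK_2013}---Gaussian comparison for $L_\infty$-norms combined with the anti-concentration of $|G|_\infty$---this reduces to establishing the sup-norm covariance bound $|\widehat{\Xi}_n-\Xi_n|_\infty=o_p(1)$.

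Next I would write $\widehat{\Xi}_n-\Xi_n=(I_K\otimes\widehat{W})\widehat{J}_n(I_K\otimes\widehat{W})-(I_K\otimes W)E(\xi_n\xi_n^\T)(I_K\otimes W)$ and split into (i) the perturbation due to $\widehat{W}-W$ and (ii) the long-run covariance error $\widehat{J}_n-E(\xi_n\xi_n^\T)$. For piece (i), Conditions~\ref{as:var} and~\ref{as:tail} together with a Bernstein inequality for $\beta$-mixing sequences yield $\max_{1\leq i\leq p}|\widehat{\Sigma}_{ii}(0)-\Sigma_{ii}(0)|=O_p\{(\log p/n)^{1/2}\}$; since variances are bounded away from zero, $|\widehat{W}-W|_\infty=o_p(1)$, and once $|\widehat{J}_n|_\infty$ is shown to be $O_p(1)$ this contributes only a lower-order error. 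The substantive work lies in piece (ii).

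The main obstacle is bounding $|\widehat{J}_n-E(\xi_n\xi_n^\T)|_\infty$ uniformly over all $O(p^4K^2)$ entries. Here I would use the decomposition
\[
\widehat{J}_n-E(\xi_n\xi_n^\T)=\sum_{|j|<\tilde{n}}\mathcal{K}(j/b_n)\{\widehat{H}(j)-H(j)\}+\sum_{|j|<\tilde{n}}\{\mathcal{K}(j/b_n)-1\}H(j)-\sum_{|j|\geq\tilde{n}}H(j),
\]
where $H(j)=E\{f_tf_{t-j}^\T\}$. The process $\{f_t\}$ in \eqref{fft} is a fixed measurable function of $(\varepsilon_t,\ldots,\varepsilon_{t+K})$, so it inherits the exponential $\beta$-mixing from Condition~\ref{as:betam} and the sub-exponential tails from Condition~\ref{as:tail}. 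Covariance inequalities for $\beta$-mixing sequences then force $|H(j)|_\infty$ to decay exponentially in $|j|$; combined with $|\mathcal{K}(x)|\asymp|x|^{-\tau}$ and $b_n\asymp n^\rho$, the bias part is $o(1)$ entrywise provided $\rho<(\tau-1)/(3\tau)$. For the stochastic part, a Bernstein-type inequality for sums of products of $\beta$-mixing variables, applied after blocking and combined with a union bound over the $O(p^4K^2)$ coordinates, gives an entrywise rate of order $b_n(\log p/n)^{1/2}$; the constraint $\rho<r_2/(2r_2+1)$ absorbs the cost of the blocking, and the whole term is $o_p(1)$ under $\log p\leq Cn^{\delta_2}$ for sufficiently small $\delta_2$.

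Combining the two pieces gives $|\widehat{\Xi}_n-\Xi_n|_\infty=o_p(1)$, hence $\widehat{\cv}_\alpha=\cv_\alpha+o_p(1)$ via Lemma~3.1 of \cite{CCK_2013}, and Proposition~\ref{tm:1} then closes the argument to deliver $\pr(T_n>\widehat{\cv}_\alpha)\to\alpha$. The technical heart of the proof is the third step: a uniform entrywise rate for a kernel-smoothed long-run covariance estimator in an ultra-high-dimensional $\beta$-mixing regime; the seemingly awkward restrictions on $\tau$, $r_2$, $\rho$ and $\delta_2$ in the theorem are precisely what is needed to keep both the deterministic tail of $\mathcal{K}$ and the stochastic Bernstein fluctuation below $o_p(1)$ after paying the $\log p$ price of the union bound.
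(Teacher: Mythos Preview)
Your proposal is correct and follows essentially the same route as the paper. The paper's own proof is a two-line deferral: it invokes Proposition~\ref{tm:1} together with Lemma~4 of the companion manuscript arXiv:1603.06663 (which supplies exactly the uniform entrywise bound $|\widehat{\Xi}_n-\Xi_n|_\infty=o_p(1)$ for the kernel long-run covariance estimator under $\beta$-mixing that you outline), and then appeals to the CCK Gaussian comparison machinery just as you do; your write-up simply unpacks what that external lemma contains.
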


\begin{theorem}\label{tm:3}
Assume that the conditions of Theorem {\rm\ref{tm:1}} hold. Let $\varrho$ be the largest element in the main diagonal of $\Xi_n$, and $\lambda(p,\alpha)=\{2\log(p^2K)\}^{1/2}+\{2\log(1/\alpha)\}^{1/2}$. Suppose that
\[
\max_{1\leq k\leq K}\max_{1\leq i,j\leq p}|\rho_{i,j}(k)|\geq \varrho^{1/2}(1+\epsilon_n)n^{-1/2}\lambda(p,\alpha)
\]
for some positive $\epsilon_n$ satisfying $\epsilon_n\rightarrow0$ and
$\epsilon_n^2\log p\rightarrow\infty$. Then it holds  under $H_1$ that
\[
\pr(T_n> \widehat{\cv}_\alpha)\rightarrow1,  \quad  \; n \to \infty.
\]
\end{theorem}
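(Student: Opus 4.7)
The overall plan is to prove that $\pr(T_n > \widehat{\cv}_\alpha) \to 1$ by separately controlling the critical value and the statistic. I would upper-bound $\widehat{\cv}_\alpha$ by $\varrho^{1/2}\lambda(p,\alpha)\{1+o_p(1)\}$, lower-bound $T_n$ by $\varrho^{1/2}(1+\epsilon_n)\lambda(p,\alpha)$ minus an $O_p(\sqrt{\log p})$ sampling error, and then exploit the fact that the gap $\varrho^{1/2}\epsilon_n\lambda(p,\alpha)\asymp \epsilon_n\sqrt{\log p}$ diverges by the hypothesis $\epsilon_n^2\log p\to\infty$.

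For the upper bound on $\widehat{\cv}_\alpha$, I would argue conditionally on the data. Write $G\sim N(0,\widehat{\Xi}_n)$ and $\widehat{\varrho}=\max_\ell\widehat{\Xi}_{n,\ell\ell}$. The classical maximal inequality gives $E(|G|_\infty\mid\textrm{data})\leq \widehat{\varrho}^{1/2}\{2\log(2p^2K)\}^{1/2}$, and Borell's Gaussian concentration inequality yields, for $t>0$,
\[
\pr(|G|_\infty>E(|G|_\infty\mid\textrm{data})+t\mid\textrm{data})\leq \exp\{-t^2/(2\widehat{\varrho})\}.
\]
Choosing $t=\widehat{\varrho}^{1/2}\{2\log(1/\alpha)\}^{1/2}$ produces $\widehat{\cv}_\alpha\leq \widehat{\varrho}^{1/2}\lambda(p,\alpha)$ up to asymptotically negligible terms. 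The long-run covariance consistency $|\widehat{\Xi}_n-\Xi_n|_\infty=o_p(1)$, already invoked in the proof of Theorem \ref{tm:2}, implies $\widehat{\varrho}=\varrho+o_p(1)$, so that $\widehat{\cv}_\alpha\leq \varrho^{1/2}\lambda(p,\alpha)\{1+o_p(1)\}$, with $\varrho$ bounded away from zero by Conditions \ref{as:var} and \ref{as:cov}.

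For the lower bound on $T_n$, pick $(i^*,j^*,k^*)$ achieving the signal strength in the hypothesis, so that $|\rho_{i^*j^*}(k^*)|\geq \varrho^{1/2}(1+\epsilon_n)n^{-1/2}\lambda(p,\alpha)$ and hence $T_n\geq n^{1/2}|\hat{\rho}_{i^*j^*}(k^*)|$. A Bernstein-type concentration inequality for $\beta$-mixing sub-Weibull sequences (justified by Conditions \ref{as:tail}--\ref{as:betam}), combined with a linearization of the plug-in standardizer $\diagn\{\widehat{\Sigma}(0)\}^{-1/2}$ around its population version, delivers the uniform rate
\[
\max_{1\leq k\leq K}\max_{1\leq i,j\leq p}|\hat{\rho}_{ij}(k)-\rho_{ij}(k)|=O_p\{(n^{-1}\log p)^{1/2}\},
\]
so that $n^{1/2}|\hat{\rho}_{i^*j^*}(k^*)|\geq \varrho^{1/2}(1+\epsilon_n)\lambda(p,\alpha)-O_p(\sqrt{\log p})$. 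Subtracting the previous display,
\[
T_n-\widehat{\cv}_\alpha\geq \varrho^{1/2}\epsilon_n\lambda(p,\alpha)-O_p(\sqrt{\log p})-o_p\{\lambda(p,\alpha)\},
\]
and since $\lambda(p,\alpha)\asymp\sqrt{\log p}$ and $\epsilon_n^2\log p\to\infty$, the right-hand side tends to $+\infty$ in probability, giving the claim.

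The main technical obstacle is the uniform deviation bound in the second step: achieving the sharp $O_p\{(n^{-1}\log p)^{1/2}\}$ rate uniformly over the $Kp^2$ correlation entries requires a Bernstein- or Fuk--Nagaev-type inequality for $\beta$-mixing sequences with sub-Weibull tails, together with a union bound and the linearization for the random normalization. A secondary subtlety is verifying that $\varrho$ is bounded away from zero and above, so that the bootstrap critical value scales with the correct variance; this is guaranteed by Conditions \ref{as:var} and \ref{as:cov} but should be tracked explicitly as $\Xi_n$ depends on $p$ and $n$.
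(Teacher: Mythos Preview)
Your overall architecture---upper-bounding $\widehat{\cv}_\alpha$ via the Gaussian maximal inequality together with Borell's concentration, and lower-bounding $T_n$ at the index $(i^*,j^*,k^*)$ achieving the population maximum---is exactly the paper's strategy. The gap is in the second step, and it is fatal as written.

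You control the sampling error at $(i^*,j^*,k^*)$ by invoking the \emph{uniform} deviation bound
\[
\max_{1\leq k\leq K}\max_{1\leq i,j\leq p}|\widehat{\rho}_{ij}(k)-\rho_{ij}(k)|=O_p\{(n^{-1}\log p)^{1/2}\},
\]
which gives $n^{1/2}|\widehat{\rho}_{i^*j^*}(k^*)-\rho_{i^*j^*}(k^*)|=O_p(\sqrt{\log p})$. Substituting, your final display becomes
\[
T_n-\widehat{\cv}_\alpha\ \geq\ \varrho^{1/2}\epsilon_n\lambda(p,\alpha)-O_p(\sqrt{\log p})-o_p\{\lambda(p,\alpha)\}.
\]
But $\lambda(p,\alpha)\asymp\sqrt{\log p}$ and $\epsilon_n\to 0$, so the signal gap $\varrho^{1/2}\epsilon_n\lambda(p,\alpha)\asymp\epsilon_n\sqrt{\log p}$ is of \emph{smaller} order than the $O_p(\sqrt{\log p})$ noise term. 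The right-hand side does not diverge to $+\infty$; it is $\sqrt{\log p}\{\epsilon_n\varrho^{1/2}-O_p(1)\}$, which is typically $-\infty$. The hypothesis $\epsilon_n^2\log p\to\infty$ only tells you $\epsilon_n\sqrt{\log p}\to\infty$, not that it beats $\sqrt{\log p}$.

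The repair, and what the paper does, is to note that $(i^*,j^*,k^*)$ is a single \emph{deterministic} index (for each $n$), so a pointwise bound suffices:
\[
\frac{n^{1/2}\{\widehat{\sigma}_{i^*,j^*}(k^*)-\sigma_{i^*,j^*}(k^*)\}}{\widehat{\sigma}_{i^*}\widehat{\sigma}_{j^*}}=O_p(1),
\]
with the implied constants uniform over the index choice by Conditions~\ref{as:var}--\ref{as:cov}. This $O_p(1)$ error \emph{is} dominated by $\epsilon_n\sqrt{\log p}\to\infty$. Concretely, the paper picks $u\sim\epsilon_n$ so that $u\lambda(p,\alpha)\varrho^{1/2}\to\infty$, and then
\[
\pr\!\left[\frac{n^{1/2}\{\widehat{\sigma}_{i^*,j^*}(k^*)-\sigma_{i^*,j^*}(k^*)\}}{\widehat{\sigma}_{i^*}\widehat{\sigma}_{j^*}}\leq -u\lambda(p,\alpha)\varrho^{1/2}\right]\to 0
\]
follows from tightness of a single standardized sum. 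Replace your uniform bound by this pointwise one and the rest of your argument goes through.
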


\section{Numerical properties}
\label{se:numer}

\subsection{Preliminary}
\label{se:41}
In this section, we illustrate the finite sample properties of the
proposed test $T_n$ by simulation. Also included is the test $T_n^*$
based on the pre-transformed data as stated in Remark~\ref{remark1} in
Section~\ref{se:test}. We always use the quadratic spectral kernel
$\mathcal{K}_{QS}(x)$ specified in (\ref{qs}). In addition, we always use
the data-driven bandwidth $b_n=1.3221\{\widehat{a}(2)\tilde{n}\}^{1/5}$
suggested in Section 6 of \cite{A91}, where
$
\widehat{a}(2)=\{\sum_{\ell=1}^{p^2K} {4\widehat{\rho}_\ell^2\widehat{\sigma}_\ell^4} (1-\widehat{\rho}_\ell)^{-8}\}\{\sum_{\ell=1}^{p^2K} {\widehat{\sigma}_\ell^4}{ (1-\widehat{\rho}_\ell)^{-4}}\}^{-1}
$
with $\widehat{\rho}_\ell$ and $\widehat{\sigma}_\ell^2$ being,
respectively, the estimated autoregressive coefficient and innovation
variance from fitting an AR(1) model to time series
$\{f_{\ell,t}\}_{t=1}^{\tilde{n}}$, where $f_{\ell,t}$ is the $\ell$-th
component of $f_t$ defined in (\ref{fft}). 
We draw $G_1,\ldots,G_B$ independently from $N(0,\widehat{\Xi}_n)$ with $B=2000$
based on (\ref{nn}) and take the $\lfloor B\alpha\rfloor$-th largest
value among $|G_1|_\infty,\ldots,|G_B|_\infty$ as the critical value
$\widehat{\cv}_\alpha$. We set the nominal significance level at
$\alpha=0.05$, $n=300$, $p=3,15,50,150$, and $K=2,4,6,8,10$. For
each setting, we replicate the experiment $500$ times.

We compare the new tests $T_n$ and $T_n^*$ with three multivariate
portmanteau tests with test statistics:
$Q_{1}  = n \sum_{k=1}^K \tr\{\widehat{\Gamma}(k)^\T
\widehat{\Gamma}(k)\}$ \citep{BP70},  $Q_{2}  =
n^2 \sum_{k=1}^K\tr\{\widehat{\Gamma}(k)^\T \widehat{\Gamma}(k)\}/(n-k)$
\citep{Hosking_1980}, and $Q_{3}  = n \sum_{k=1}^K
\tr\{\widehat{\Gamma}(k)^\T\widehat{\Gamma}(k)\} + p^2K(K+1)/(2n)$
\citep{LM_1981}, where $\widehat{\Gamma}(k)$ is the sample correlation
matrix (\ref{eq:autocorr}).
Also, we compare $T_n$ and $T_n^*$ with the Lagrange
multiplier test \citep{Lutkepohl_2005}, as well
as a likelihood ratio test proposed by \cite{TB81}. The test of \cite{TB81} is designed for testing for a vector autoregressive model of order $r$ against that of order $r+1$
and is therefore applicable for testing \eqref{eq:test}
with $r=0$. In particular, different from all the other tests included in
the comparison, the test of \cite{TB81} does not involve
the lag parameter $K$.
For those tests relying on the asymptotic
$\chi^2$-approximation, it is known that the $\chi^2$-approximation
is poor when the degree of freedom is large. In our simulation, we
perform those tests based on the normal approximation instead when $p>10$.
Further discussions on those tests are referred to Section 3.1 of
\cite{Li_2004} and Section 4.4 of \cite{Lutkepohl_2005}. The new tests
$T_n$ and $T_n^*$, together with the aforementioned other tests, have been
implemented in an \texttt{R} package \texttt{HDtest} currently
available online at \texttt{CRAN}.

\subsection{Empirical sizes}

To examine the approximations for significance levels of the tests,
we generate data from the white noise model $\varepsilon_t=A z_t$,
where $\{z_t\}$ is a $p\times 1$ white noise. We consider
three different loading matrices for $A$ as following.
\begin{description}
\item
Model 1: Let $S=(s_{k\ell})_{1\leq k,\ell\leq p}$ for $s_{k \ell}=0.995^{|k-\ell|}$, then let $A=S^{1/2}$.

\item
Model 2: Let $r=\lceil p/2.5\rceil$, $S=(s_{k\ell})_{1\leq k,\ell\leq p}$ where $s_{kk}=1$, $s_{k\ell}=0.8$ for $r(q-1)+1\leq k\neq \ell\leq rq$ for $q=1,\ldots,   \lfloor p/r \rfloor  $, and $s_{k \ell}=0$ otherwise. Let $A=S^{1/2}$ which is a block diagonal matrix.

\item
Model 3: Let $A=(a_{k\ell})_{1\leq k,\ell\leq p}$ with $a_{k\ell}$'s being independently generated from  $U(-1,1)$.
\end{description}
We consider the two types of white noise: (i) $z_t,\,  t\ge 1,$ are
independent and $N(0, I_p)$, and (ii) $z_t$
consists of $p$ independent
autoregressive conditionally heteroscedastic processes, i.e. each component process
is of the form $u_{t} = \sigma_{t} e_{t}$, where $e_{t}$ are independent and $N(0, 1)$,
and $\sigma_{t}^2 = \gamma_0 + \gamma_1 u_{t-1}^2$ with $\gamma_0$ and $\gamma_1$
generated from, respectively, $U(0.25, 0.5)$ and $U(0, 0.5)$ independently for
different component processes. Experiments with more complex white noise processes
are reported in the Supplementary Material.

\begin{table}[h!]
   {
   \centering
\caption{The empirical sizes ($\%$)  of the tests $T_n$, $T_n^*$,
$Q_{1},  Q_{2}$, $ Q_{3}$, Lagrange multiplier test (LM) and Tiao \& Box'
test (TB) for testing white noise $\varepsilon_t = A z_t$ at the $5\%$ nominal
level, where $ z_t, \, t\ge 1,$ are independent and $N(0, I_p)$.}
      \noindent\makebox[1\textwidth]{%
      \begin{adjustbox}{max width=1\textwidth}
\begin{tabular}{cc |ccccccc |ccccccc| ccccccccc}
& &\multicolumn{7}{ c| }{Model 1}& \multicolumn{7}{ c| }{Model 2} & \multicolumn{7}{ c }{Model 3}   \\
 $p$& $K$&  $T_n$ &$T_n^*$& $Q_{1}$ & $Q_{2}$ & $Q_{3}$ & LM & TB & $T_n$ &$T_n^*$& $Q_{1}$ & $Q_{2}$ & $Q_{3}$ & LM & TB &$T_n$ &$T_n^*$& $Q_{1}$ & $Q_{2}$ & $Q_{3}$ & LM & TB\\

3&2&5.2&5.8&5.2&5.6&5.6&5.2&5.2&3.2&6.6&3.8&3.8&3.8&3.8&4.8&4.0&6.4&4.0&4.0&4.0&5.2&3.8\\
&4&4.6&7.4&3.6&4.4&4.2&4.4&&4.0&7.4&3.2&3.4&3.4&3.6&&3.8&5.4&4.8&5.0&5.0&5.4&\\
&6&5.6&8.6&4.4&5.2&5.0&5.4&&2.8&7.2&3.2&3.6&3.4&3.0&&4.0&5.4&6.0&6.4&6.2&5.2&\\
&8&4.4&8.4&3.6&5.0&4.4&3.0&&3.8&6.2&2.6&3.0&2.8&3.2&&3.8&6.4&5.0&6.8&6.2&4.6&\\
&10&4.2&7.8&3.6&4.4&4.2&4.0&&3.0&6.0&1.4&3.0&2.4&2.4&&3.6&5.6&5.4&7.4&7.2&4.6&\\
15&2&3.8&5.2&4.2&4.8&4.8&5.0&4.8&2.8&4.4&4.2&5.0&5.0&5.4&7.6&3.0&3.8&3.4&4.0&4.0&3.8&5.2\\
&4&4.0&5.4&2.8&5.0&5.0&3.8&&2.6&4.2&2.8&4.6&4.6&3.6&&2.4&4.8&2.2&3.0&3.0&3.2&\\
&6&3.6&6.2&3.2&5.2&5.2&3.8&&2.2&5.2&3.4&5.2&5.0&3.4&&2.0&5.8&1.6&3.2&3.2&2.4&\\
&8&3.6&6.6&2.0&5.2&5.0&1.0&&2.4&6.0&0.8&5.0&4.6&2.0&&2.2&7.2&0.8&2.8&2.8&1.4&\\
&10&3.0&7.0&1.4&5.6&5.2&0.4&&2.2&6.2&1.0&5.0&4.8&1.6&&2.6&6.6&1.0&4.0&3.8&0.8&\\
50&2&2.4&4.0&1.6&2.4&2.4&1.2&8.8&3.0&4.2&1.4&2.4&2.4&1.4&7.8&1.8&4.8&1.6&2.8&2.8&1.2&7.8\\
&4&4.0&4.4&0.6&3.0&2.8&0.0&&2.6&4.6&0.6&2.2&2.2&0.0&&2.2&5.2&0.8&2.6&2.6&0.0&\\
&6&3.6&4.8&0.0&3.8&3.6&&&1.8&5.2&0.2&2.8&2.6&&&2.0&6.4&0.2&2.2&2.2&&\\
&8&3.8&4.4&0.0&3.8&3.6&&&2.0&5.4&0.0&2.2&2.2&&&1.6&7.2&0.0&2.8&2.4&&\\
&10&4.6&4.8&0.0&3.0&3.0&&&1.4&5.4&0.0&2.8&2.2&&&1.4&6.2&0.0&2.0&1.8&&\\
150&2&3.0&4.4&0.0&0.0&0.0&& 0.0&3.0&3.8&0.0&0.2&0.0&&0.0&1.4&3.6&0.0&0.2&0.2&&0.0\\
&4&1.4&4.2&0.0&0.0&0.0&&&2.0&4.2&0.0&0.0&0.0&&&1.4&3.4&0.0&0.0&0.0&&\\
&6&1.8&2.8&0.0&0.0&0.0&&&2.4&3.2&0.0&0.0&0.0&&&1.2&4.2&0.0&0.0&0.0&&\\
&8&2.2&3.8&0.0&0.0&0.0&&&1.8&3.2&0.0&0.2&0.2&&&0.6&4.8&0.0&0.0&0.0&&\\
&10&3.2&4.6&0.0&0.2&0.0&&&1.6&4.2&0.0&0.0&0.0&&&0.4&5.4&0.0&0.0&0.0&&
\end{tabular}
\end{adjustbox}
}
\label{t1} }
\end{table}

\begin{table}[h!]
     {
   \centering
\caption{The empirical sizes ($\%$)  of the tests $T_n$, $T_n^*$,
$Q_{1},  Q_{2}$, $ Q_{3}$, Lagrange multiplier test (LM) and Tiao \& Box'
test (TB) for testing white noise $\varepsilon_t = A z_t$ at the $5\%$ nominal level,
where $z_t$ consists of $p$ independent
autoregressive conditionally heteroscedastic processes.}
      \noindent\makebox[1\textwidth]{%
\begin{adjustbox}{max width=1\textwidth}
\begin{tabular}{cc |ccccccc |ccccccc| ccccccccc}
& &\multicolumn{7}{ c |}{Model 1}& \multicolumn{7}{ c| }{Model 2} & \multicolumn{7}{ c }{Model 3}   \\
 $p$& $K$&  $T_n$ &$T_n^*$& $Q_{1}$ & $Q_{2}$ & $Q_{3}$ & LM & TB & $T_n$ &$T_n^*$& $Q_{1}$ & $Q_{2}$ & $Q_{3}$ & LM & TB &$T_n$ &$T_n^*$& $Q_{1}$ & $Q_{2}$ & $Q_{3}$ & LM & TB\\

3&2&4.0&5.4&3.0&3.0&4.2&4.0&4.4&6.4&8.6&7.0&7.0&9.4&8.4&6.4&3.8&7.2&5.4&5.6&8&6.6&7.2\\
&4&4.4&7.6&4.4&4.6&5.2&5.0&&5.6&8.2&5.2&6.0&7.4&6.2&&5.0&7.8&5.4&6.0&8.2&7.2\\
&6&3.0&6.6&4.8&5.6&6.4&4.6&&5.0&6.6&5.8&6.2&7.2&4.8&&4.8&6.8&4.0&4.4&6.6&4.8\\
&8&3.2&6.4&4.4&5.8&7.4&5.6&&4.6&6.8&5.8&7.0&7.8&6.4&&4.8&6.6&4.2&5.0&5.4&3.4\\
&10&3.6&6.0&5.0&5.8&7.8&5.6&&4.4&6.2&5.4&6.4&7.2&4.2&&4.8&5.8&4.6&5.0&5.4&4.0\\
15&2&4.2&5.6&4.0&3.8&4.6&5.0&7.0&4.8&5.0&3.2&3.2&3.6&3.4&5.6&2.4&4.8&5.0&5.4&6.6&5.2&6.8\\
&4&4.0&5.8&5.0&5.0&5.2&4.0&&3.8&5.0&4.0&4.0&4.0&2.2&&2.6&7.0&2.8&2.8&2.8&3.2\\
&6&4.2&5.0&4.2&4.0&4.2&3.0&&2.8&6.2&5.6&5.0&5.6&2.0&&2.4&6.8&3.0&3.0&3.2&2.6\\
&8&3.8&6.0&4.8&4.8&4.8&1.4&&2.2&5.8&4.8&4.8&4.8&2.0&&2.8&6.2&1.8&2.8&3.8&2.2\\
&10&4.6&4.8&5.6&5.4&5.4&1.2&&3.4&5.4&4.0&3.8&4.4&1.4&&2.4&8.2&0.8&4.2&4.4&0.8\\
50&2&4.4&4.2&2.2&2.2&2.2&0.6&6.2&3.2&4.0&1.4&2.4&2.8&1.0&8.2&2.2&3.2&2.2&2.0&2.0&0.4&7.6\\
&4&3.8&4.6&2.8&2.8&3.0&0.0&&2.2&5.4&2.0&2.0&2.0&0.0&&2.2&4.0&2.0&1.8&1.8&0.0\\
&6&4.6&6.2&1.4&1.4&1.8&&&3.6&5.2&1.8&2.8&1.8&&&1.2&4.8&2.0&2.0&2.0&\\
&8&3.6&7.2&3.0&3.0&3.0&&&2.4&6.0&1.4&1.2&1.6&&&1.6&5.8&0.0&0.0&1.6&&\\
&10&3.6&5.8&3.2&3.2&2.8&&&2.2&5.6&1.8&1.8&1.8&&&1.4&6.6&0.0&0.0&1.6\\
150&2&4.8&3.6&0.0&0.0&0.0&&0.0&1.2&2.8&0.0&0.0&0.0&&0.2&1.6&2.8&0.0&0.0&0.0&&0.0\\
&4&2.8&3.2&0.0&0.0&0.0&&&2.2&3.4&0.0&0.0&0.0&&&1.0&3.2&0.2&0.0&0.2\\
&6&2.0&4.2&0.0&0.0&0.0&&&2.6&3.6&0.0&0.0&0.0&&&1.0&3.2&0.0&0.0&0.0\\
&8&1.6&5.0&0.0&0.0&0.0&&&1.6&4.4&0.0&0.0&0.0&&&0.8&3.4&0.2&0.0&0.2\\
&10&2.6&4.8&0.2&0.2&0.2&&&2.0&5.0&0.4&0.2&0.4&&&1.0&4.6&0.0&0.0&0.0
 \end{tabular}
\end{adjustbox}}
\label{t3} }
 \end{table}

Tables \ref{t1}--\ref{t3} report the empirical sizes
 of tests $T_n$ and $T_n^*$, along with those of
the three portmanteau tests, the Lagrange multiplier test, and the test of \cite{TB81}.
As Tiao \& Box' test does not involve the lag
parameter $K$, we only report its empirical size once for each
$p$ in the tables. Also the Lagrange multiplier test is only applicable
when $pK<n$, as the testing statistic is calculated from
a multivariate regression.

 Tables \ref{t1}--\ref{t3} indicate that  $T_n$ and $T_n^*$
perform about the same as the other five tests when the dimension $p$ is small, such as
$p=3$. The portmanteau, Lagrange multiplier and Tiao \& Box's tests,
however, fail badly to attain
the nominal significance level as the dimension $p$ increases, as the
empirical sizes severely underestimate the nominal level when, for example, $p=50$. In fact the
empirical sizes for the portmanteau tests and Tiao \& Box's  test are almost 0 under
all the settings with $p=150$, while the Lagrange multiplier test, not available
when $p=150$, deviates quickly from
the nominal level when $pK$ is close to $n$.
In contrast, the new test $T_n$ performs much better, though still
underestimates the nominal level when $p$ is relatively large,
particularly for Model 3. Noticeably, $T_n^*$, the procedure
combining the new test with the time series principal component analysis, produces
empirical sizes much closer to the nominal level than all other
tests across almost all the settings with $p=50$ and $150$.

We also observe that both the portmanteau tests $Q_{2}$ and $Q_{3}$ perform
similarly, and outperform $Q_{1}$ when $p$ is large. This
is in line with the fact that the asymptotic approximations for
$Q_{2}$ and $Q_{3}$ are more accurate than that for $Q_{1}$. In addition,
Tables \ref{t1}--\ref{t3}, as well as the results in the Supplementary
Material, indicate that the proposed tests are more robust with respect to the
choice of the prescribed lag parameter $K$.
The test $T_n$, and the portmanteau tests, perform better
under  Models 1 and 2 than under Model 3 when
$p$ is large. As the entries in the loading matrix $A$ in Model 3 can be
both positive and negative, the signals $z_t$ may be
weakened due to possible cancellations. Nevertheless, with the aid
of time series principal component analysis, $T_n^*$ perform reasonably well across
all the settings including Model 3.

In summary, the proposed tests, especially $T_n^*$, attain the nominal
level much more accurate than existing tests when $p$ is large.
For small $p$, all the tests are about equally accurate in attaining
the nominal significance level.

\subsection{Empirical power}

To conduct  the power comparison among the different tests, we consider
two non-white noise models. Put $k_0=\min(\lceil p/5\rceil,12)$.
\begin{description}
\item Model 4: $\varepsilon_t=A
\varepsilon_{t-1}+e_{t}$, where $e_{t}, \, t\ge 1$, are independent,
each $e_{t}$ consists of $p$ independent $t_8$ random variables,
and the coefficient matrix
$A \equiv ( a_{k \ell} )$ is generated as follows:
$a_{k\ell}\sim U(-0.25,0.25)$ independently for $1 \le k,\ell\leq k_0$,
and $a_{k\ell}=0$ otherwise. 
Thus only the first $k_0$ components of $\varepsilon_t$ are not white noise.

\item
Model 5: $\varepsilon_t=Az_t$, where $z_t =(z_{1,t}, \ldots, z_{p,t})^\T$.
For $1\le k \le  k_0 $, $(z_{k,1},\ldots,
z_{k,n})^\T\sim N(0,\Sigma)$, where $\Sigma$ is an $n \times n$
matrix with 1 as the main diagonal elements,  $ 0.5|i-j|^{-0.6}$ as the
$(i,j)$-th element for $1\leq |i-j|\leq 7$, and 0 as all the other elements.
For $k> k_0$, $ z_{k,1},\ldots, z_{k,n}$ are independent and $t_8$ random
variables. The coefficient matrix $A\equiv(a_{k\ell})$ is generated as
follows:
$a_{k\ell}\sim U(-1,1)$ with probability $1/3$ and $a_{k\ell}=0$ with probability $2/3$
independently for $1\le k \ne \ell \le p $, and
$a_{kk}=0.8$ for $1\le k \le p$.

\end{description}

 \begin{figure}
 \begin{center}
        \subfigure{\includegraphics[scale=0.35]{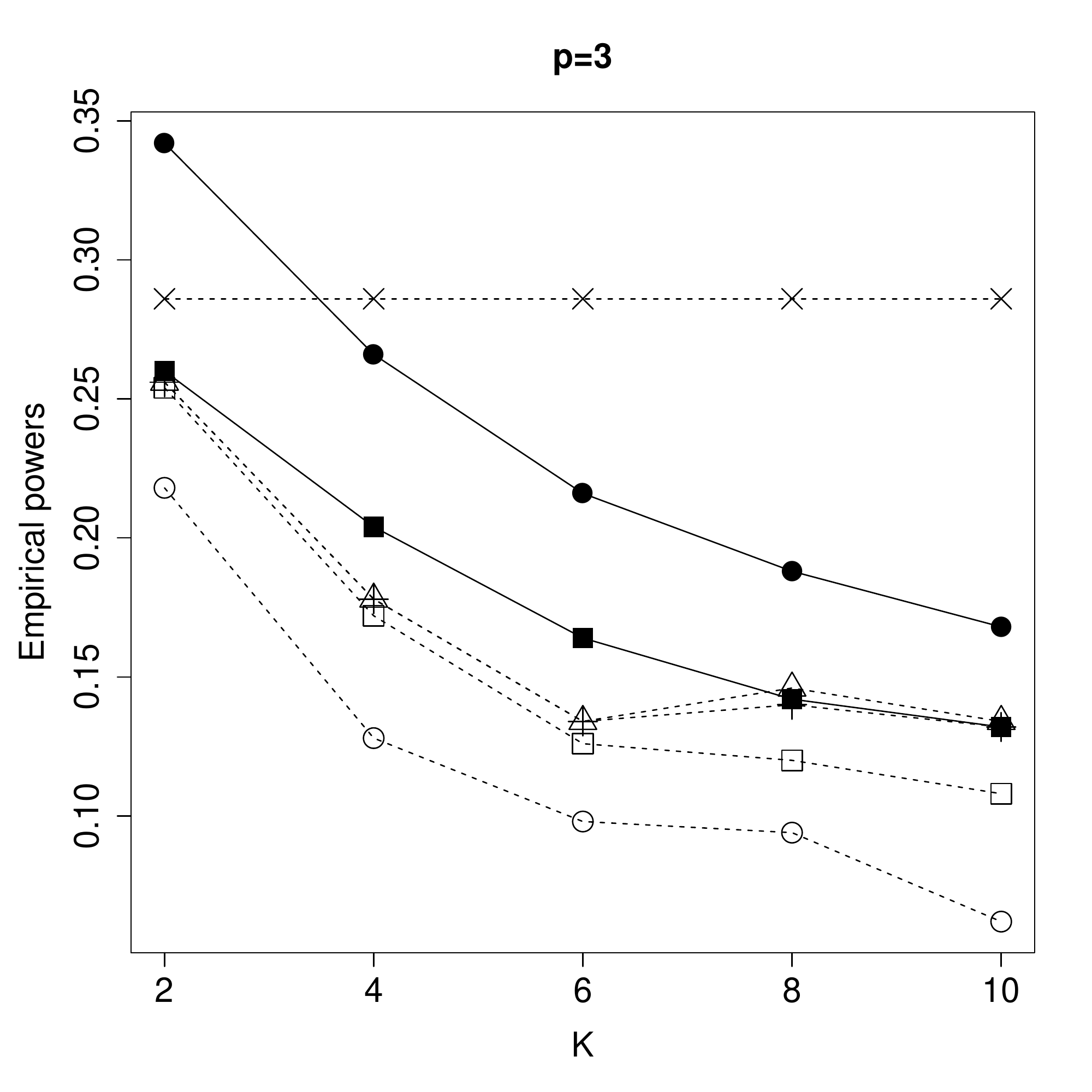}}\subfigure{\includegraphics[scale=0.35]{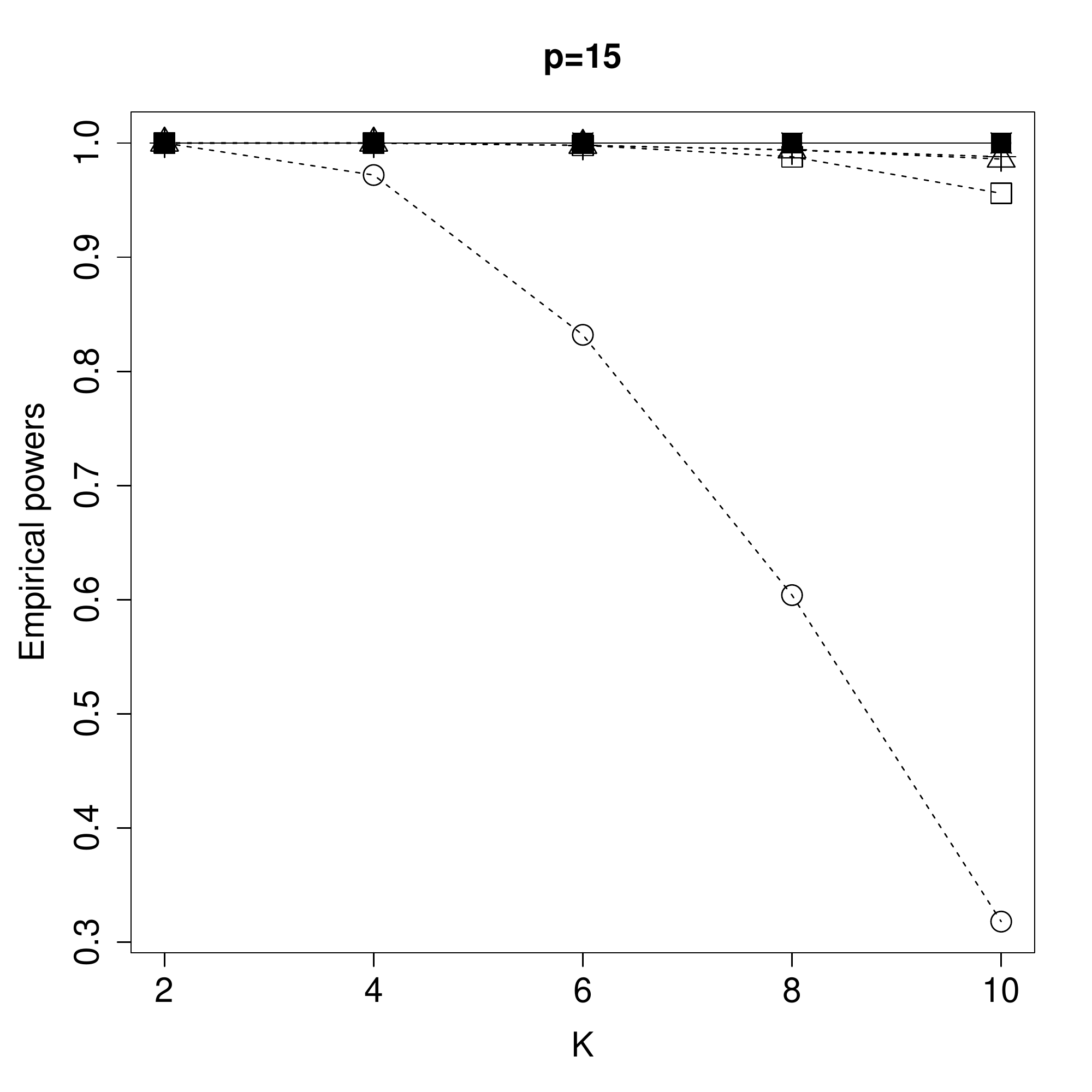}}\\
   \subfigure{\includegraphics[scale=0.35]{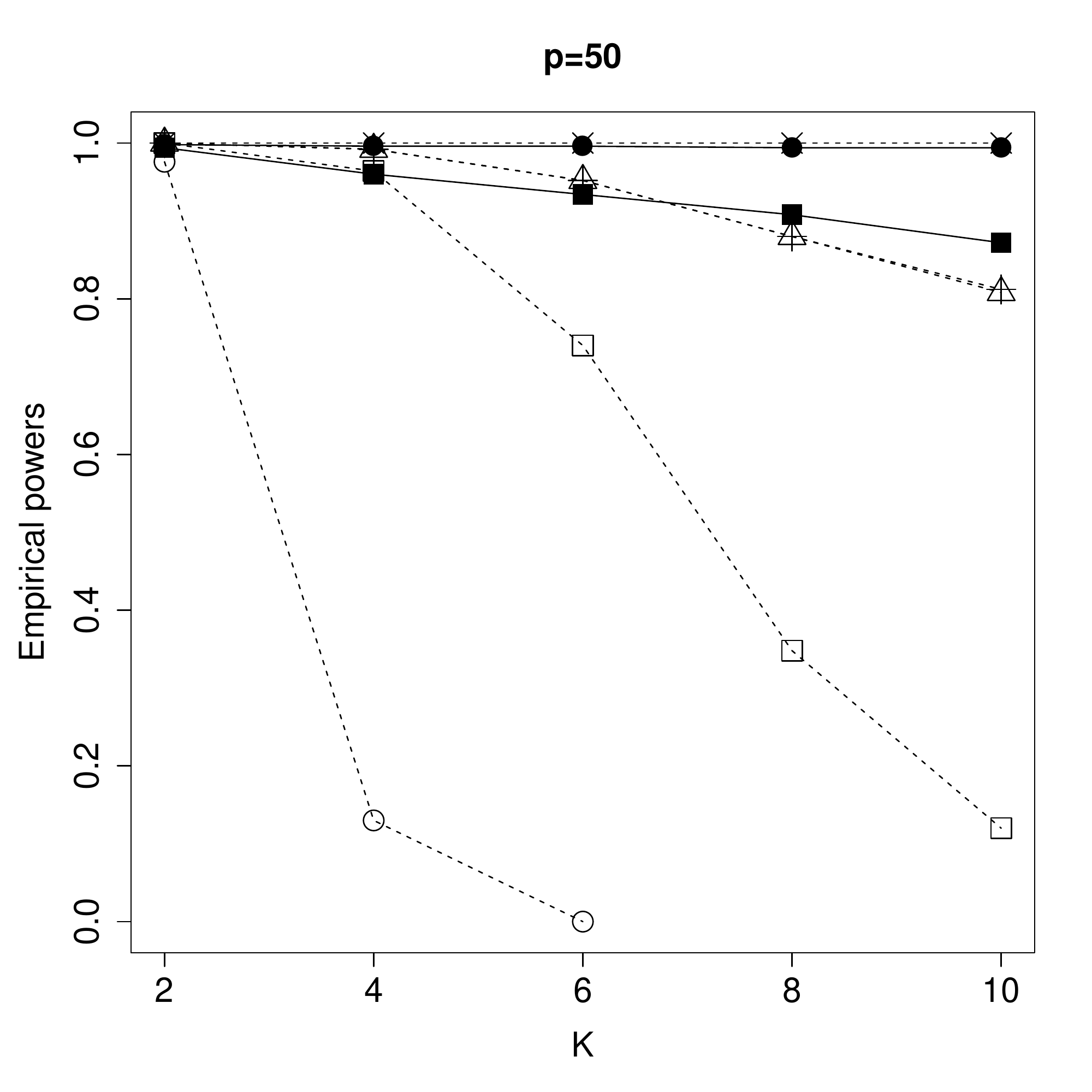}}\subfigure{\includegraphics[scale=0.35]{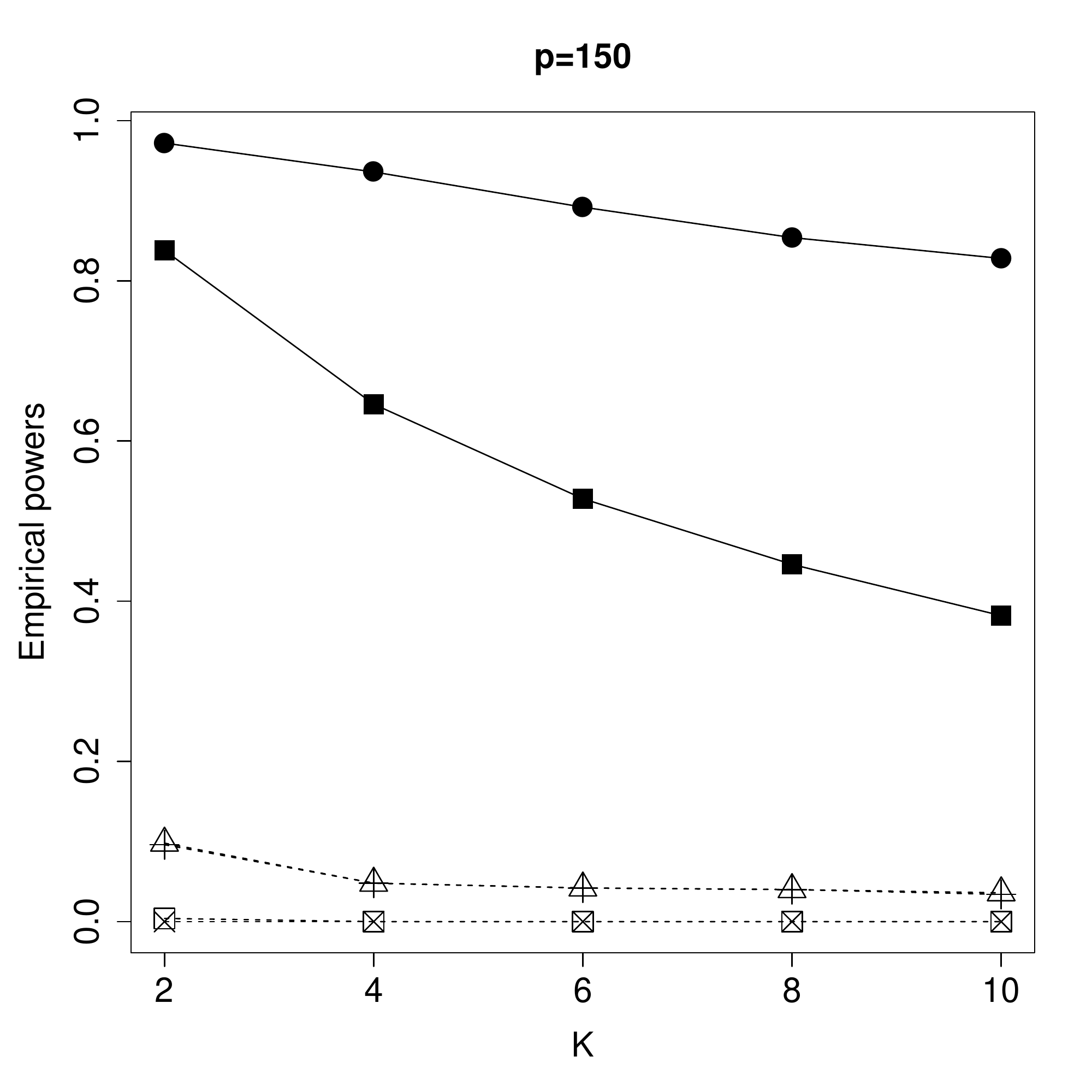}}
   \end{center}
 \caption{Plots of empirical power  against lag $K$ for the new tests
$T_n$ (solid and $\mathsmaller{\blacksquare}$ lines) and $T_n^*$ (solid
and $\bullet$ lines), the portmanteau tests $Q_{1}$ (dashed and
$\vartriangle$ lines), $Q_{2}$ (dashed and $+$ lines) and $Q_{3}$ (dashed
and $\oblong$ lines), the Lagrange multiplier test (dashed and $\circ$
lines), and Tiao and Box' test (dashed and $\times$).  The data are
generated from Model 4 with sample size $n=300$. The nominal level is
$\alpha=5\%$.} 
\label{p1}
    \end{figure}

  \begin{figure}
       \begin{center}
        \subfigure{\includegraphics[scale=0.35]{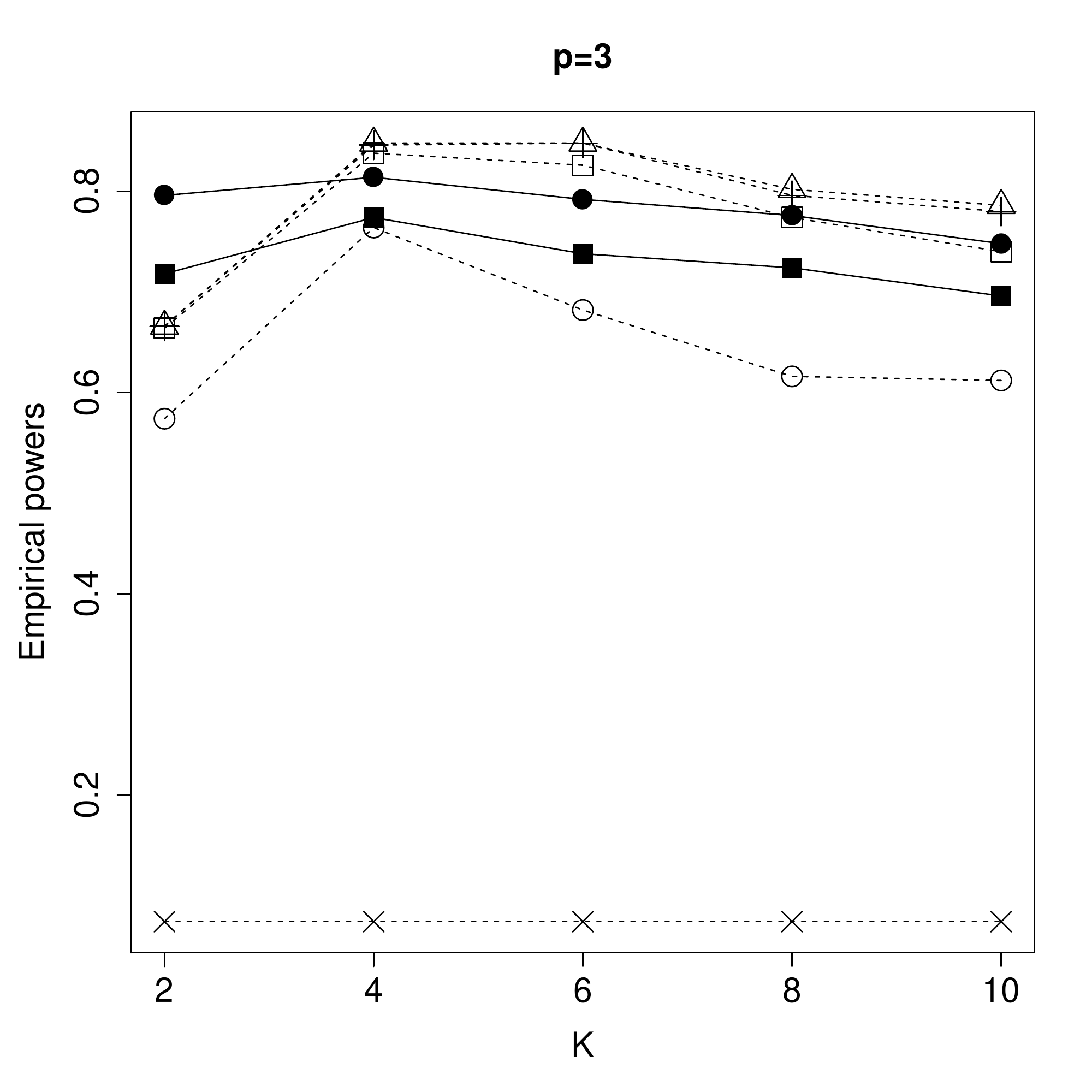}}\subfigure{\includegraphics[scale=0.35]{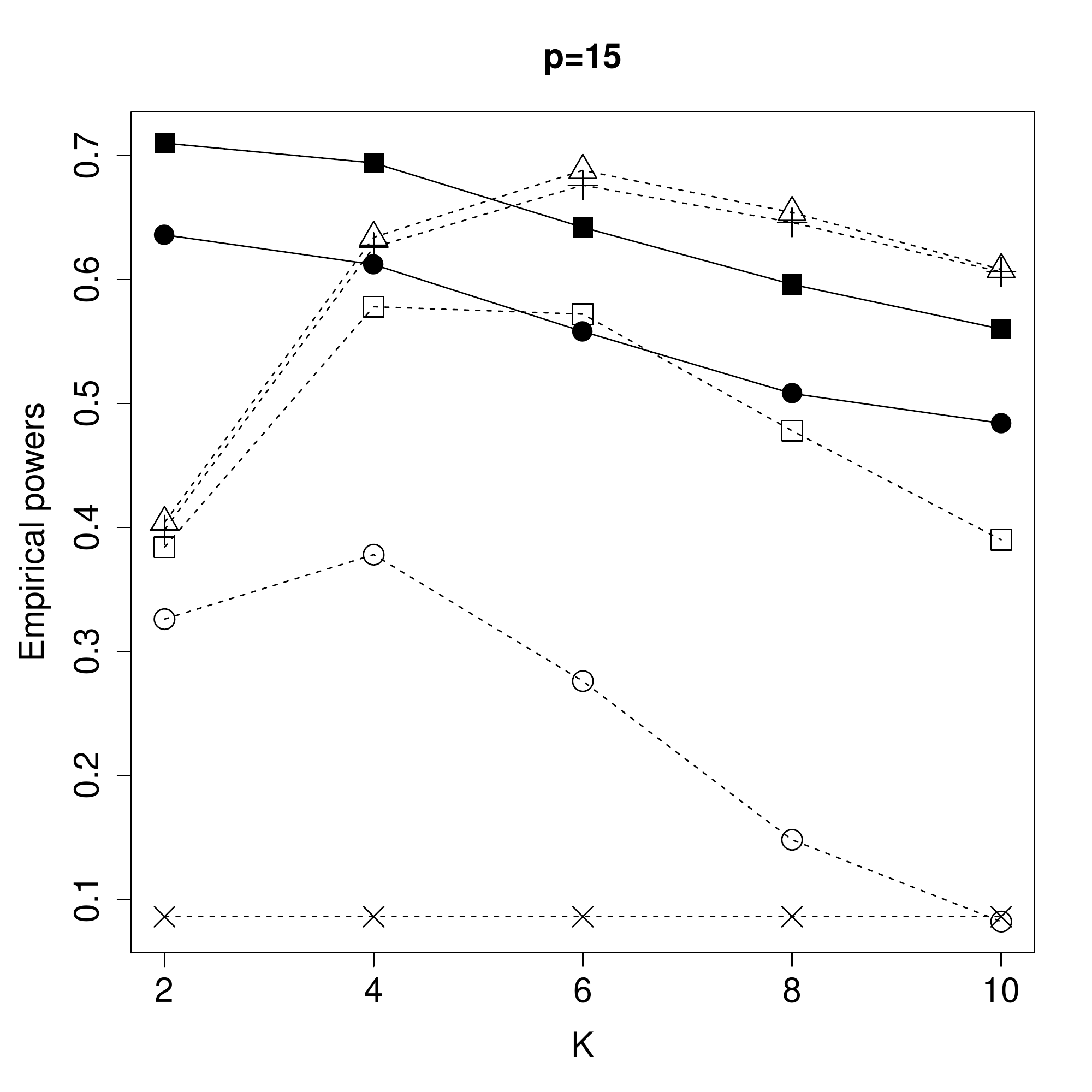}}\\
   \subfigure{\includegraphics[scale=0.35]{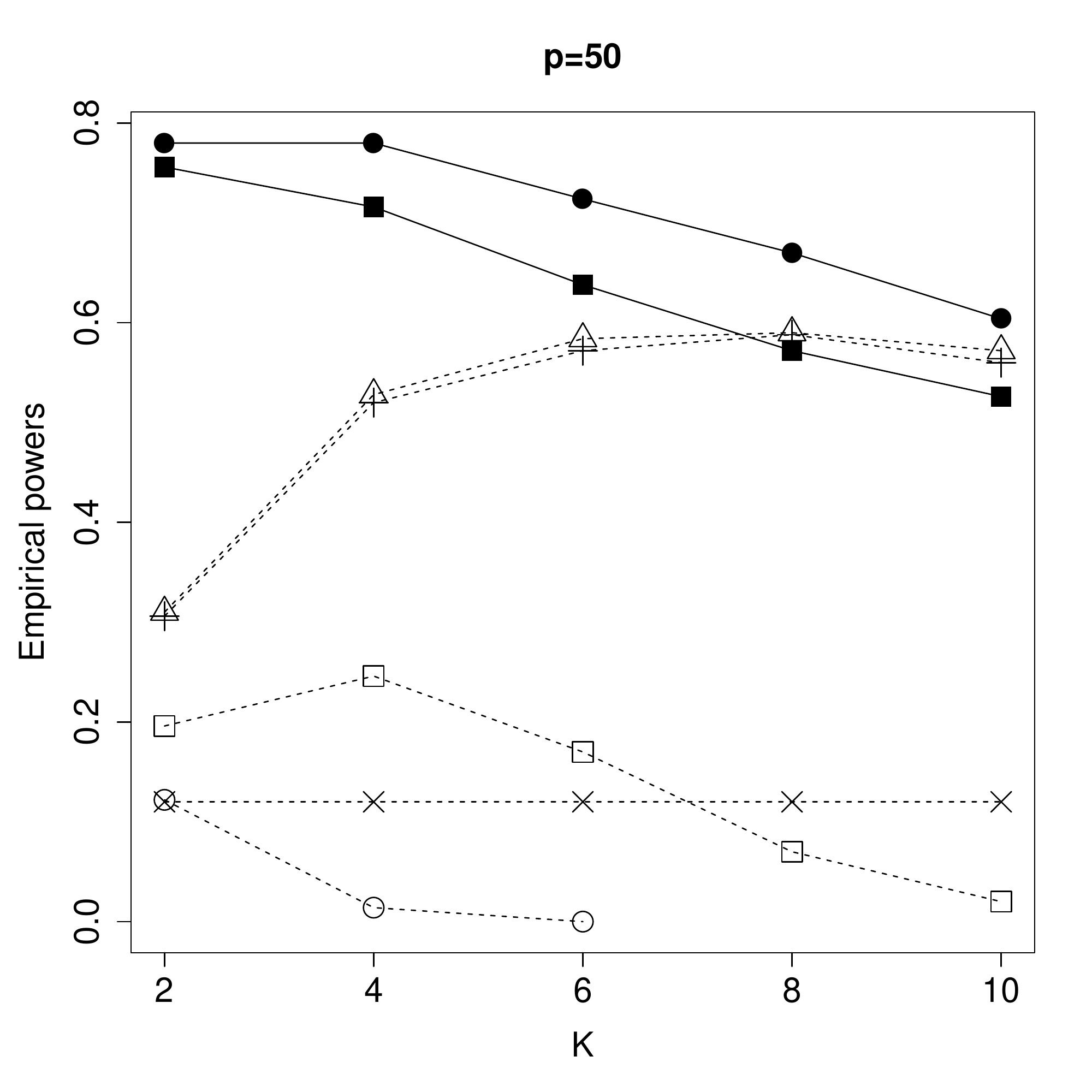}}\subfigure{\includegraphics[scale=0.35]{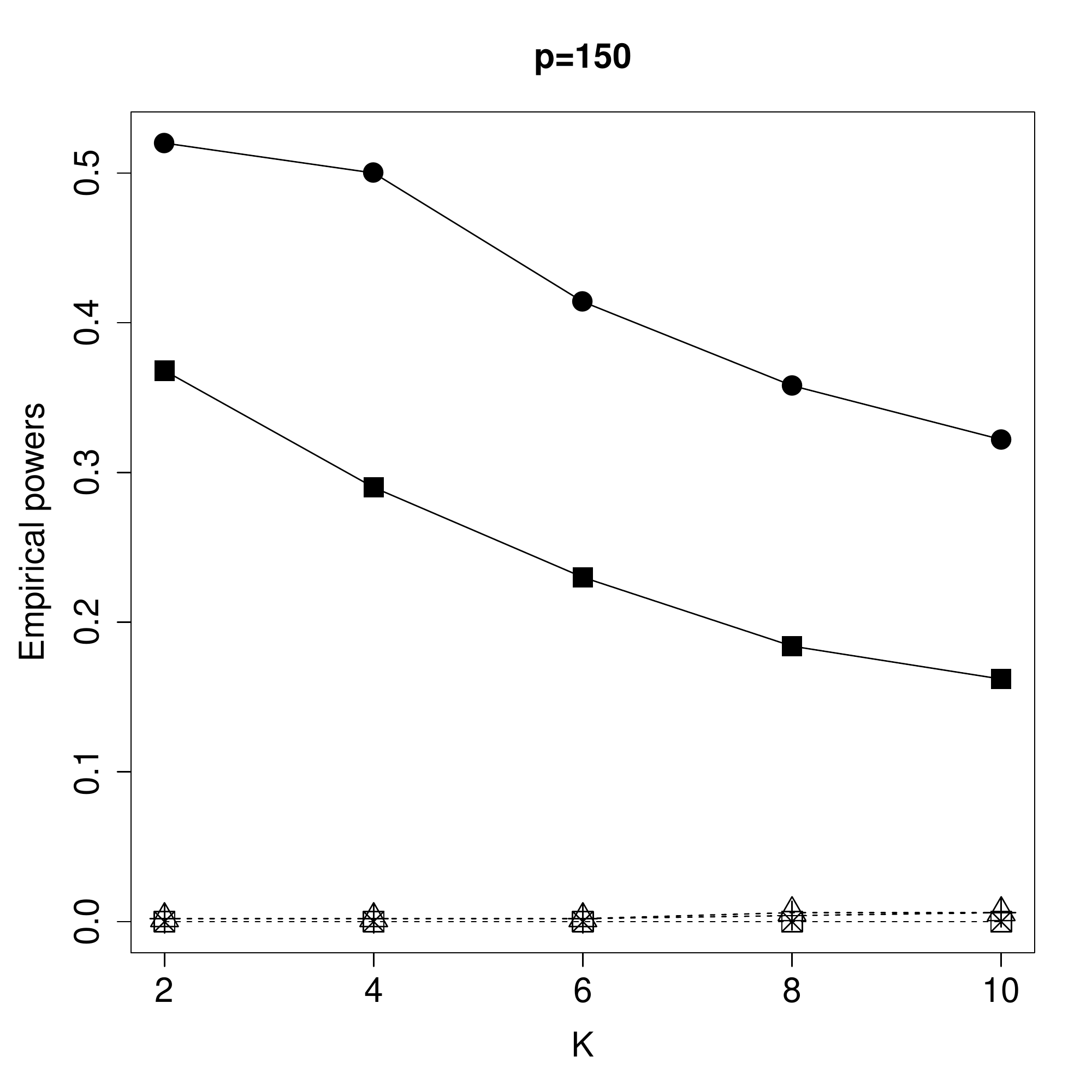}}
   \end{center}
 \caption{Plots of empirical power  against lag $K$ for the new tests
$T_n$ (solid and $\mathsmaller{\blacksquare}$ lines) and $T_n^*$ (solid
and $\bullet$ lines), the portmanteau tests $Q_{1}$ (dashed and
$\vartriangle$ lines), $Q_{2}$ (dashed and $+$ lines) and $Q_{3}$ (dashed
and $\oblong$ lines), the Lagrange multiplier test (dashed and $\circ$
lines), and Tiao and Box' test (dashed and $\times$).  The data are
generated from Model 5 with sample size $n=300$. The nominal level is
$\alpha=5\%$.} 
\label{p4}
\end{figure}

Figs. \ref{p1}--\ref{p4} display the empirical power curves of the seven
tests under consideration against
the lag parameter $K$. 
As Tiao \& Box' test involves no lag parameter $K$, its
power curves are flat.
Also note that the Lagrange multiplier test is only available
for $p=3,15$ and $p=50$ while $K=2,4,6$.
When $p=150$, the proposed tests, especially $T_n^*$, maintain
substantial power while all the other five tests are powerless.
Under Model 4, where the autocorrelation
decays relatively fast, the proposed tests $T_n$ and $T_n^*$  are
substantially more powerful than
the portmanteau tests and the Lagrange multiplier test even when $p$ is
small. In addition, Fig. \ref{p1} and the results in the Supplementary
Materials
also indicate that
the existing tests compromise more in power
than the new tests when the loading matrix $A$ is relatively sparse.
When the autocorrelation is strong, as in Model 5, the portmanteau tests
and the Lagrange multiplier test perform well when $p$ is small, e.g.,
$p=3$; see Fig. \ref{p4}.
Finally, as
expected, $T_n^*$ is more powerful than $T_n$ when $p$ is large, and the
improvement is substantial when, for example, $p=150$. Overall, our
proposed tests $T_n$ and $T_n^*$ are more powerful than the traditional
tests when the dimension $p$ is large or moderately large.
This pattern is also observed in a
more extensive comparison reported in the Supplementary Material.

\section{Applications in model diagnosis}

Let $\{y_t\}$ and $\{u_t\}$ be observable $p\times 1$ and $q\times 1$
time series, respectively.
Let
\begin{equation}\label{eq:general}
y_t=g(u_t;\theta_0)+\varepsilon_t,
\end{equation}
where $g(\cdot;\cdot)$ is a known link function,  and $\theta_0\in\Theta$
is an unknown $s\times 1$
parameter vector.
One of the most frequently used
procedures for model diagnosis is to test
if the error process
$\{\varepsilon_t\}$ is white noise. Since $\{\varepsilon_t\}$ is unknown,
the diagnostic test is instead applied to the residuals
\begin{equation} \label{resid}
\widehat \varepsilon_t \equiv
y_t - g(u_t; \widehat \theta)~~~~(t=1, \ldots, n),
\end{equation}
 where
$\widehat \theta$ is an appropriate estimator for $\theta_0$.

Model \eqref{eq:general} encompasses a large number of frequently used
models, including both linear and 
nonlinear vector autoregressive models
with or without exogenous variables. It also includes linear invertible and identifiable
vector autoregressive and moving average models by allowing $q=\infty$ and $s=\infty$.
Let $g(\cdot;\cdot)=\{g_1(\cdot;\cdot),\ldots,g_p(\cdot;\cdot)\}^\T$,
and $\mathcal{U}$ be the domain of $u_t$.
Let the true value $\theta_0$ of model \eqref{eq:general} be an inner point of $\Theta$.
We assume that the link function
$g(\cdot;\cdot)$ satisfies the following condition.
\begin{condition}\label{as:tes}
Denote by $\Theta_0$ a small neighborhood of $\theta_0$. For some given
metric \linebreak $|\cdot|_*$ defined on $\Theta$, it holds that
$|g_i(u;\theta^*)-g_i(u;\theta^{**})|\leq
M_i(u)|\theta^*-\theta^{**}|_*+R_i(u;\theta^*,\theta^{**})$ for any
$\theta^*, \theta^{**}\in\Theta_0$, $u\in\mathcal{U}$ and $i=1,\ldots,p$,
where $\{M_i(\cdot)\}_{i=1}^p$ and
$\{R_i(\cdot;\cdot,\cdot)\}_{i=1}^p$ are two sets of
non-negative functions that satisfy  $\sup_{1\leq i\leq
p}n^{-1}\sum_{t=1}^nM_i^2(u_t)=O_p(\varphi_{1,n})$ and $\sup_{1\leq i\leq
p}\sup_{\theta^*,\theta^{**}\in\Theta_0}n^{-1}\sum_{t=1}^nR_i^2(u_t;\theta^*,\theta^{**})=O_p(\varphi_{2,n})$
for some $\varphi_{1,n}>0$ (which may diverge) and
$\varphi_{2,n}\rightarrow0$ as $n\rightarrow\infty$.
\end{condition}

In fact, the first part of Condition \ref{as:tes} can be replaced by the
Lipschitz continuity $|g_i(u;\theta^*)-g_i(u;\theta^{**})|\leq
M_i(u)|\theta^*-\theta^{**}|_*^\phi+R_i(u;\theta^*,\theta^{**})$ for some
$\phi\in(0,1]$. Since the proofs for Theorem \ref{tm:4} under these two
types of continuity are identical, we only state the result for $\phi=1$ explicitly.
The remainder term $R_i(\cdot;\cdot,\cdot)$ is employed to accommodate
the models with infinite-dimensional parameter $\theta_0$.
When $\theta_0$ has finite number of components,
we can let $|\cdot|_*$ be
the standard $L_2$-norm. If the link function $g_i(u;\theta)$ is continuously
differentiable with respect to $\theta$, it follows from a Taylor expansion that
$|g_i(u;\theta^*)-g_i(u;\theta^{**})|\leq |\nabla_\theta
g_i(u;\bar{\theta})|_2|\theta^*-\theta^{**}|_2$ for some $\bar{\theta}$
lies between $\theta^{*}$ and
$\theta^{**}$. If there exists an envelop function $M_i(\cdot)$
satisfying $\sup_{\theta\in\Theta}|\nabla_\theta
g_i(u;\bar{\theta})|_2\leq M_i(u)$ for any $u\in\mathcal{U}$, the
first part of Condition \ref{as:tes} holds with $R_i(u;\theta^*,\theta^{**})\equiv0$. When
$\theta_0$ is an infinite dimensional parameter, we can select
$|\cdot|_*$ as the vector $L_1$-norm. Put
$\theta=(\theta_1,\theta_2,\ldots)^\T$. If $\partial
g_i(u;\theta)/\partial\theta_j$ exists for any $j=1,2,\ldots$, it follows from  a Taylor
expansion that
$g_i(u;\theta^*)-g_i(u;\theta^{**})=\sum_{j=1}^\infty(\theta^*_j-\theta^{**}_j)\partial g_i(u;\bar{\theta})/\partial
\theta_j$ for some $\bar{\theta}$ lies between $\theta^*$ and
$\theta^{**}$. For some given diverging $d$, letting $M_i(u)=\sup_{1\leq
j\leq d}\sup_{\theta\in\Theta}|{\partial g_i(u;\theta)}/{\partial
\theta_j}|$ and $R_i(u;\theta^*,\theta^{**})=|\sum_{j=d+1}^\infty
(\theta_j^*-\theta_j^{**}){\partial g_i(u;\bar{\theta})}/{\partial
\theta_j}|$, we have
\begin{align*}
|g_i(u;\theta^*)-g_i(u;\theta^{**})|\leq&~ \sup_{1\leq j\leq d}\bigg|\frac{\partial g_i(u;\bar{\theta})}{\partial \theta_j}\bigg|\sum_{j=1}^d|\theta_j^*-\theta_j^{**}|+\bigg|\sum_{j=d+1}^\infty (\theta_j^*-\theta_j^{**})\frac{\partial g_i(u;\bar{\theta})}{\partial \theta_j}\bigg|\\
\leq&~M_i(u)|\theta^*-\theta^{**}|_1+R_i(u;\theta^*,\theta^{**}).
\end{align*}

\begin{theorem}\label{tm:4}
Let Condition \ref{as:tes} and the conditions of Theorems {\rm\ref{tm:1}}
and {\rm\ref{tm:3}} hold. Let
$|\widehat{\theta}-\theta_0|_*=O_p(\zeta_n)$ for some
$\zeta_n\rightarrow0$. Assume that $\zeta_n^2\varphi_{1,n}\rightarrow0$ as
$n\rightarrow\infty$. Then Theorems {\rm\ref{tm:2}} and
{\rm\ref{tm:3}} still hold if $\{\varepsilon_1, \ldots, \varepsilon_n\}$ is
replaced by $\{\widehat\varepsilon_1, \ldots, \widehat\varepsilon_n\}$ defined in
(\ref{resid}).
\end{theorem}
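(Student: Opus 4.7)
The plan is to reduce Theorem \ref{tm:4} to Theorems \ref{tm:2}--\ref{tm:3} by proving that the test statistic $T_n^{\rm res}$ computed from the residuals $\widehat\varepsilon_t$ differs from the ``oracle'' statistic $T_n$ computed from the unobserved errors $\varepsilon_t$ by an amount that is asymptotically negligible relative to the Gaussian anti-concentration rate of $|G|_\infty$. Setting $\Delta_t := g(u_t;\widehat\theta)-g(u_t;\theta_0)$ so that $\widehat\varepsilon_t = \varepsilon_t - \Delta_t$, I would first write the decomposition
\[
\widehat\Sigma_{ij}^{\rm res}(k) - \widehat\Sigma_{ij}(k)
= -\frac{1}{n}\sum_{t=1}^{n-k}\Delta_{i,t+k}\varepsilon_{j,t}
-\frac{1}{n}\sum_{t=1}^{n-k}\varepsilon_{i,t+k}\Delta_{j,t}
+\frac{1}{n}\sum_{t=1}^{n-k}\Delta_{i,t+k}\Delta_{j,t},
\]
and derive corresponding expressions for the diagonal normalising factors $\widehat\Sigma_{ii}^{\rm res}(0)$.

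Second, I would use Condition \ref{as:tes} together with $|\widehat\theta-\theta_0|_*=O_p(\zeta_n)$ to obtain $|\Delta_{i,t}|^2 \le 2\zeta_n^2 M_i^2(u_t) + 2R_i^2(u_t;\widehat\theta,\theta_0)$, and hence the uniform envelope
\[
\sup_{1\le i\le p}\,\frac{1}{n}\sum_{t=1}^{n-k}\Delta_{i,t+k}^2 = O_p\bigl(\zeta_n^2\varphi_{1,n}+\varphi_{2,n}\bigr).
\]
Combining Condition \ref{as:tail}'s exponential tail with a maximal inequality for $\beta$-mixing sequences (Condition \ref{as:betam}) yields $\sup_{1\le j\le p} n^{-1}\sum_t \varepsilon_{j,t}^2 = O_p(1)$, with the logarithmic factor from the maximum absorbed by the exponential tail. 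Cauchy--Schwarz applied uniformly in $(i,j,k)$ and the pointwise lower bound on the true variances from Condition \ref{as:var} then give
\[
\max_{1\le k\le K}\max_{1\le i,j\le p}\bigl|\widehat\rho_{ij}^{\,\rm res}(k)-\widehat\rho_{ij}(k)\bigr| = O_p\bigl\{(\zeta_n^2\varphi_{1,n}+\varphi_{2,n})^{1/2}\bigr\},
\]
so that $|T_n^{\rm res}-T_n|$ is of the same order up to the $\sqrt n$ factor; the rate conditions $\zeta_n^2\varphi_{1,n}\to 0$ and $\varphi_{2,n}\to 0$ ensure this perturbation is small enough against the anti-concentration scale $\sqrt{\log p}$ of $|G|_\infty$.

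Third, invoking Proposition \ref{tm:1} together with the CCK-type anti-concentration inequality for the maximum of a Gaussian vector (as used in the proof of Proposition \ref{tm:1}) would give $\sup_{s\ge 0}|\pr(T_n^{\rm res}>s)-\pr(|G|_\infty>s)|\to 0$. A parallel perturbation argument applied entry-wise to the kernel estimator $\widehat J_n$ in \eqref{ae}, using the same decomposition on each $f_t$ (which is quadratic in $\varepsilon_t$), shows that $\widehat\Xi_n$ built from the residuals remains $|\cdot|_\infty$-consistent for $\Xi_n$; hence $\widehat{\cv}_\alpha$ built from residuals is asymptotically equivalent to the one built from errors, which yields the analogue of Theorem \ref{tm:2}. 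The power statement of Theorem \ref{tm:3} follows because the lower bound on the signal there strictly dominates the $o_p$ perturbation induced by using residuals.

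The main obstacle is producing a uniform-in-$(i,j)$ bound for the cross term $n^{-1}\sum_t \Delta_{i,t+k}\varepsilon_{j,t}$ that is asymptotically negligible after multiplication by $\sqrt{n}$ and against the $\sqrt{\log p}$ critical-value scale, since Condition \ref{as:tes} gives only a Lipschitz-with-remainder bound rather than exploitable gradient structure. The key ingredients are the envelope rate $\sup_i n^{-1}\sum_t M_i^2(u_t) = O_p(\varphi_{1,n})$, the exponential-tail control of $\sup_j n^{-1}\sum_t\varepsilon_{j,t}^2$ allowed by Condition \ref{as:tail} (which incurs only a logarithmic penalty in $p$), and the rate assumption $\zeta_n^2\varphi_{1,n}\to 0$ tuned exactly so that the whole perturbation is negligible at the required scale.
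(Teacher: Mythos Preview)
Your decomposition of $\widehat\Sigma^{\rm res}_{ij}(k)-\widehat\Sigma_{ij}(k)$ and the Cauchy--Schwarz step giving
\[
\frac{1}{n}\sum_{t=1}^n\{g_i(u_t;\widehat\theta)-g_i(u_t;\theta_0)\}^2=O_p(\zeta_n^2\varphi_{1,n}+\varphi_{2,n})
\]
uniformly in $i$ are exactly the computations the paper carries out. But your overall route diverges from the paper's at a crucial point, and that is where the gap is.

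The paper does \emph{not} try to show that $|T_n^{\rm res}-T_n|$ is small. Its entire proof consists of invoking Lemma~3.1 of \cite{CCK_2013} to reduce the problem to $|\widehat\Xi_n^{*}-\widehat\Xi_n|_\infty=o_p(1)$, where $\widehat\Xi_n^{*}$ is the residual-based bootstrap covariance, and then establishing this via $|\widehat\Sigma^{*}(0)-\widehat\Sigma(0)|_\infty=o_p(1)$ (and, by the same argument, $|\widehat J_n^{*}-\widehat J_n|_\infty=o_p(1)$). These are statements about \emph{unscaled} quantities, so the Cauchy--Schwarz bound $O_p\{(\zeta_n^2\varphi_{1,n}+\varphi_{2,n})^{1/2}\}=o_p(1)$ is precisely what is needed, and the hypotheses $\zeta_n^2\varphi_{1,n}\to0$ and $\varphi_{2,n}\to0$ suffice.

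Your plan, in contrast, requires $|T_n^{\rm res}-T_n|=n^{1/2}\max_{k,i,j}|\widehat\rho_{ij}^{\rm res}(k)-\widehat\rho_{ij}(k)|$ to be negligible against the anti-concentration scale. But the Cauchy--Schwarz bound you derive yields only
\[
|T_n^{\rm res}-T_n|=O_p\bigl\{(n\zeta_n^2\varphi_{1,n}+n\varphi_{2,n})^{1/2}\bigr\},
\]
and the theorem's hypotheses do \emph{not} force this to vanish: at the standard parametric rate $\zeta_n\asymp n^{-1/2}$ with $\varphi_{1,n}=O(1)$ this is $O_p(1)$, not $o_p(1)$. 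You correctly flag this as ``the main obstacle'' at the end of your proposal, but the closing assertion that the rate condition $\zeta_n^2\varphi_{1,n}\to0$ is ``tuned exactly'' so that the $n^{1/2}$-scaled cross term is negligible is simply not what the crude Cauchy--Schwarz bound delivers. Without a sharper control of $n^{-1/2}\sum_t\Delta_{i,t+k}\varepsilon_{j,t}$ uniformly in $(i,j)$, the comparison $T_n^{\rm res}\approx T_n$ does not go through under the stated conditions; the paper sidesteps this entirely by working only at the level of the bootstrap covariance estimator.
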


\section*{Acknowledgement}
We are grateful to the Editor, an Associate Editor and two referees for
their helpful suggestions. Jinyuan Chang was supported in part by the
Fundamental Research Funds for the Central Universities of China,
the National Natural Science Foundation of China, and
the Center of Statistical Research at
Southwestern University of Finance and Economics. Qiwei Yao was supported in part
by the U.K. Engineering and Physical Sciences Research Council.
Wen Zhou was supported in part by the U.S. National Science Foundation. 

\section*{Supplementary material}
Supplementary material available at Biometrika online contains more extensive
comparison by simulation of the seven tests employed in Section 4.


\section*{Appendix}

\setcounter{figure}{0}
\makeatletter
\renewcommand{\thefigure}{S\@arabic\c@figure}
\makeatother

\setcounter{table}{0}
 \makeatletter
\renewcommand{\thetable}{S\@arabic\c@table}
\makeatother

\appendix

\subsection{Technical lemmas}

Let
\[
\widehat{\mu}=[\vecd\{\widehat{\Gamma}(1)\}^\T,\ldots,\vecd\{\widehat{\Gamma}(K)\}^\T]^\T,
\quad \widehat{W}=\textrm{diag}\{\widehat{\Sigma}(0)\}^{-1/2}\otimes\textrm{diag}\{\widehat{\Sigma}(0)\}^{-1/2}.
\]
Then the testing statistic
$T_n={n}^{1/2}|\widehat{\mu}|_\infty. $
It follows from (\ref{eq:autocorr}) that
\[
 \widehat{\mu} \equiv (\widehat{\mu}_1,\ldots,\widehat{\mu}_{p^2K})^\T=(I_K\otimes
\widehat{W})[\vecd\{\widehat{\Sigma}(1)\}^\T,\ldots,\vecd\{\widehat{\Sigma}(K)\}^\T]^\T.
\]
Let
\[
  {\mu}
\equiv({\mu}_1,\ldots,{\mu}_{p^2K})^\T
=(I_K\otimes W)[\vecd\{\widehat{\Sigma}(1)\}^\T,\ldots,
\vecd\{\widehat{\Sigma}(K)\}^\T]^\T,
\]
$$\widehat{Z}={n}^{1/2}\max_{1\leq \ell\leq p^2K}\widehat{\mu}_\ell,\qquad
 Z={n}^{1/2}\max_{1\leq \ell\leq p^2K}\mu_\ell, \qquad V=\max_{1\leq \ell\leq p^2K}G_\ell,
$$
where $G=(G_1,\ldots,G_{p^2K})^\T\sim N(0,\Xi_n)$ with $\Xi_n$ specified
in (\ref{eq:XI}). Throughout the Appendix, $C \in (0, \infty)$ denotes a generic
constant that does not depend on $p$ and $n$, and it may be different at different
places.

\begin{lemma}\label{la:hatw}
Assume that Conditions {\rm\ref{as:var}--\ref{as:betam}} hold. Let
$\gamma$ satisfy $\gamma^{-1}=2r_1^{-1}+r_2^{-1}$, and $\log
p=o\{n^{\gamma/(2-\gamma)}\}$. Then
$ |\widehat{W}-W|_\infty\leq Cn^{-1/2}(\log p)^{1/2}$ with probability at least $1-Cp^{-1}$.
\end{lemma}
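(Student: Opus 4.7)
The plan is to reduce the matrix bound to a concentration inequality for the estimated marginal variances and then finish with elementary algebra. Both $W$ and $\widehat W$ are diagonal with $(i,j)$-entries $(\sigma_i\sigma_j)^{-1}$ and $(\widehat\sigma_i\widehat\sigma_j)^{-1}$, where $\sigma_i^2=\mathrm{var}(\varepsilon_{i,t})$ and $\widehat\sigma_i^2=n^{-1}\sum_{t=1}^n \varepsilon_{i,t}^2$ are the diagonal entries of $\Sigma(0)$ and $\widehat\Sigma(0)$. Writing $\widehat\sigma_i^{-1}\widehat\sigma_j^{-1}-\sigma_i^{-1}\sigma_j^{-1}=(\widehat\sigma_i^{-1}-\sigma_i^{-1})\widehat\sigma_j^{-1}+\sigma_i^{-1}(\widehat\sigma_j^{-1}-\sigma_j^{-1})$, and using Condition \ref{as:var} (which gives $\sigma_i^2\ge C_1>0$) together with Condition \ref{as:tail} (integrating the tail bound gives $\sigma_i^2\le C$ uniformly), it suffices to control $\Delta:=\max_{1\le i\le p}|\widehat\sigma_i^2-\sigma_i^2|$. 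Indeed, on the event $\Delta\le C_1/2$ the identity $\sigma_i^{-1}-\widehat\sigma_i^{-1}=(\widehat\sigma_i^2-\sigma_i^2)/\{\widehat\sigma_i\sigma_i(\widehat\sigma_i+\sigma_i)\}$ yields $\max_i|\widehat\sigma_i^{-1}-\sigma_i^{-1}|\le C'\Delta$, and the previous display then gives $|\widehat W-W|_\infty\le C''\Delta$.

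For the tail bound on $\Delta$, set $Y_{i,t}=\varepsilon_{i,t}^2-\sigma_i^2$, so that $\widehat\sigma_i^2-\sigma_i^2=n^{-1}\sum_{t=1}^nY_{i,t}$. By Condition \ref{as:tail}, $P(\varepsilon_{i,t}^2>x)\le C_2\exp(-C_3 x^{r_1/2})$, so the centred variables $Y_{i,t}$ are sub-Weibull with index $r_1/2$, while Condition \ref{as:betam} ensures that $\{Y_{i,t}\}_t$ inherits the $\beta$-mixing coefficients $\beta_k\le\exp(-C_4k^{r_2})$. I would invoke a Bernstein-type inequality for $\beta$-mixing sequences with sub-Weibull marginals, e.g. the one in Merlev\`ede, Peligrad and Rio (2011), which delivers
\[
P\!\left(\bigg|\sum_{t=1}^nY_{i,t}\bigg|>x\right)\le n\exp(-c_1x^\gamma)+\exp\!\left(-\frac{c_2x^2}{n+c_3 x}\right),
\]
with $\gamma^{-1}=2r_1^{-1}+r_2^{-1}$; this is precisely the composite index combining the tail exponent $r_1/2$ of $Y_{i,t}$ with the mixing exponent $r_2$.

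Taking $x=C_0n^{1/2}(\log p)^{1/2}$, the second term is bounded by $p^{-c_2C_0^2/(2c_3+2)}$, while the first becomes $n\exp\{-c_1C_0^\gamma n^{\gamma/2}(\log p)^{\gamma/2}\}$, which is $o(p^{-2})$ precisely under the growth condition $\log p=o\{n^{\gamma/(2-\gamma)}\}$ stated in the lemma. Choosing $C_0$ large and taking a union bound over $i=1,\ldots,p$ gives $P(\Delta>C_0n^{-1/2}(\log p)^{1/2})\le Cp^{-1}$. Combined with the reduction in the first paragraph, this produces $|\widehat W-W|_\infty\le Cn^{-1/2}(\log p)^{1/2}$ with probability at least $1-Cp^{-1}$, as required.

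The main obstacle is securing the exponential concentration inequality with the right composite exponent $\gamma$: everything else is elementary. The growth threshold $\log p=o\{n^{\gamma/(2-\gamma)}\}$ is exactly the borderline at which the Weibull-type term and the Bernstein-type term in the tail bound can simultaneously be driven below $p^{-2}$, so no slack is being wasted. A minor bookkeeping point is that the argument must first establish $\widehat\sigma_i^2\ge C_1/2$ uniformly in $i$ before dividing by $\widehat\sigma_i$; this is automatic on the high-probability event where $\Delta\le C_1/2$, which in turn is guaranteed by the same concentration step once $n$ is large enough that $n^{-1/2}(\log p)^{1/2}\le C_1/(2C_0)$.
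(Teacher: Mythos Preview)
Your proposal is correct and follows essentially the same route as the paper: reduce $|\widehat W-W|_\infty$ to $\max_i|\widehat\sigma_i^2-\sigma_i^2|$ by elementary algebra, then apply a Bernstein-type exponential inequality for $\beta$-mixing sequences with sub-Weibull marginals to obtain the rate $n^{-1/2}(\log p)^{1/2}$ with the composite exponent $\gamma^{-1}=2r_1^{-1}+r_2^{-1}$. The paper invokes this concentration step as a black box (Lemma~9 of arXiv:1410.2323, which yields a four-term tail bound) whereas you cite Merlev\`ede--Peligrad--Rio directly, and the paper's algebraic decomposition passes through $|\widehat\sigma_i-\sigma_i|$ rather than going straight from $|\widehat\sigma_i^2-\sigma_i^2|$ to $|\widehat\sigma_i^{-1}-\sigma_i^{-1}|$, but these are cosmetic differences only.
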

\begin{proof} Put
$\diagn\{\widehat{\Sigma}(0)\}=\diagn(\widehat{\sigma}_1^2,\ldots,\widehat{\sigma}_p^2)$ and $\diagn\{\Sigma(0)\}=\diagn(\sigma_1^2,\ldots,\sigma_p^2)$.
By Condition \ref{as:var},
\begin{equation}\label{eq:tr1}
\begin{split}
|\widehat{W}-W|_\infty=&~\max_{1\leq i,j\leq p}|\widehat{\sigma}_i^{-1}\widehat{\sigma}_j^{-1}-\sigma_i^{-1}\sigma_j^{-1}|\leq\bigg(\max_{1\leq i\leq p}|\widehat{\sigma}_i^{-1}-\sigma_i^{-1}|\bigg)^2+C\max_{1\leq i\leq p}|{\widehat{\sigma}_i}^{-1}-{\sigma}_i^{-1}|.
\end{split}
\end{equation}
To bound the term on the right-hand side of \eqref{eq:tr1}, we first
consider the tail probability of $\max_{1\leq i\leq
p}|\widehat{\sigma}_i-\sigma_i|$. Following the same arguments of Lemma 9
in arXiv:1410.2323, it holds that
\begin{align*}
\pr \bigg(\max_{1\leq i\leq p}|\widehat{\sigma}_i^2-\sigma_i^2|>\varepsilon\bigg)
& \leq Cpn\exp(-C\varepsilon^{\gamma}n^{\gamma})+Cpn\exp(-C\varepsilon^{\tilde{\gamma}/2}n^{\tilde{\gamma}})\\
& +Cp\exp(-C\varepsilon^2n)+Cp\exp(-C\varepsilon n)
\end{align*}
for any $\varepsilon>0$ such that $n\varepsilon\rightarrow\infty$,
where $\tilde{\gamma}^{-1}=r_1^{-1}+r_2^{-1}$. Therefore,
if $\log p=o\{n^{\gamma/(2-\gamma)}\}$, with probability at least $1-Cp^{-1}$,
$ \max_{1\leq i\leq p}|\widehat{\sigma}_i^2-\sigma_i^2|\leq Cn^{-1/2}(\log p)^{1/2}. $
Since $\widehat{\sigma}_i^2-\sigma_i^2=(\widehat{\sigma}_i-\sigma_i)^2+
2\sigma_i(\widehat{\sigma}_i-\sigma_i)$, it holds with probability at least $1-Cp^{-1}$
that  $\max_{1\leq i\leq p}|\widehat{\sigma}_i-\sigma_i|\leq Cn^{-1/2}(\log p)^{1/2}$.
Finally, it follows from the identify $\widehat{\sigma}_i^{-1}-\sigma_i^{-1}=-(\widehat{\sigma}_i-\sigma_i)\widehat{\sigma}_i^{-1}\sigma_i^{-1}$
that $ \max_{1\leq i\leq p}|\widehat{\sigma}_i^{-1}-\sigma_i^{-1}|\leq
Cn^{-1/2}(\log p)^{1/2}$ holds with probability at least $1-Cp^{-1}$.
Now the lemma follows from \eqref{eq:tr1} immediately. 
\end{proof}

\begin{lemma}\label{la:2}
Assume that Conditions {\rm\ref{as:var}--\ref{as:betam}} hold. Let
$\gamma^{-1}=2r_1^{-1}+r_2^{-1}$ and $\tilde{\gamma}^{-1}=r_1^{-1}+r_2^{-1}$. Then
\[
\begin{split}
\pr\bigg[\max_{1\leq k\leq K}|\vecd\{\widehat{\Sigma}(k)\}-\vecd\{\Sigma(k)\}|_\infty>s\bigg]\leq&~Cp^2n\exp(-Cs^{\gamma}n^{\gamma})+Cp^2n\exp(-Cs^{\tilde{\gamma}/2}n^{\tilde{\gamma}})\\
&+Cp^2\exp(-Cs^2n)+Cp^2\exp(-Cs n)
\end{split}
\]
for any $s>0$ and $ns\rightarrow\infty$.
\end{lemma}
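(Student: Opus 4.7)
The plan is to reduce the stated $p^2$-sized tail bound to a Bernstein-type deviation inequality for a single scalar centred sum, and then take a union bound. Applying the union bound over the triples $(i,j,k)\in\{1,\ldots,p\}^2\times\{1,\ldots,K\}$, and noting that $K$ is constant, it suffices to prove the same inequality with the $p^2$ factor removed for one fixed triple. Writing $X_t=\varepsilon_{i,t+k}\varepsilon_{j,t}-E(\varepsilon_{i,t+k}\varepsilon_{j,t})$, we have $\widehat\Sigma_{ij}(k)-\Sigma_{ij}(k)=n^{-1}\sum_{t=1}^{n-k}X_t$ up to a deterministic $O(k/n)$ term that is negligible whenever $ns\to\infty$; outside this range the stated bound is trivial. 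The task thus reduces to bounding $\pr(|\sum_{t=1}^{n-k}X_t|>ns/2)$.

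Before invoking concentration I would record two structural properties that $\{X_t\}$ inherits from $\{\varepsilon_t\}$. By Condition~\ref{as:tail} applied to each factor of the product, $\pr(|X_t|>x)\le C\exp(-Cx^{r_1/2})$, so $X_t$ is sub-Weibull with exponent $r_1/2$. Since $X_t$ is a measurable function of the finite window $(\varepsilon_t,\varepsilon_{t+k})$, Condition~\ref{as:betam} implies that $\{X_t\}$ is itself $\beta$-mixing with the same exponential rate $C\exp(-C\ell^{r_2})$ at any lag $\ell>k$.

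The core step is a Bernstein-type inequality for partial sums of such mixing sub-Weibull sequences, which I would prove by the standard truncation--blocking template used in Lemma~9 of arXiv:1410.2323, to which the authors already defer in the proof of Lemma~\ref{la:hatw}. The three ingredients are: factorwise truncation of each $\varepsilon$ at a level $L$, replacing $X_t$ by a version bounded by $L^2$ at the cost of the tail contribution $n\exp(-CL^{r_1})$; partitioning $\{1,\ldots,n-k\}$ into alternating big and small blocks of length $b$ and coupling the big-block sums with an independent sequence of the same marginals via $\beta$-mixing at lag $b$, paying $(n/b)\exp(-Cb^{r_2})$; and applying the iid Bernstein inequality to the coupled block sums, each bounded by $bL^2$ with total variance $O(n)$, which produces the two Bernstein terms $\exp(-Cns^2)\vee\exp(-Cns/(bL^2))$. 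The centring bias from truncation is of smaller order and absorbed into the constants.

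The last step, and the main obstacle, is the bookkeeping that matches these four sources to the four claimed terms via appropriate choices of $L$ and $b$. The exponent $\tilde\gamma^{-1}=r_1^{-1}+r_2^{-1}$ of the second term arises by balancing the truncation tail $\exp(-CL^{r_1})$ against the mixing cost $\exp(-Cb^{r_2})$; the exponent $\gamma^{-1}=2r_1^{-1}+r_2^{-1}$ of the first term arises by additionally balancing these against the Bernstein linear regime $\exp(-Cns/(bL^2))$, whose denominator carries the square $L^2$ responsible for the doubled $r_1^{-1}$; the third and fourth terms are exactly the variance and linear regimes of Bernstein. The only conceptual novelty over the $k=0$ situation of Lemma~9 of arXiv:1410.2323 is the lag-$k$ shift in the $\beta$-mixing index, which only affects constants. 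Multiplying the single-triple bound by $Kp^2\le Cp^2$ then yields the claimed inequality.
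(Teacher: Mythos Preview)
Your proposal is correct and follows essentially the same route as the paper: the paper's proof simply notes that for each fixed $k$ the single-entry tail bound is exactly Lemma~9 of arXiv:1410.2323, and then applies the Bonferroni inequality over $k$ (the max over $i,j$ being already absorbed into that lemma's statement). You unpack the truncation--blocking mechanism behind that cited lemma and take the union bound over all $p^2K$ triples at once, which amounts to the same argument with more of the internal bookkeeping made explicit.
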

\begin{proof} Notice that
$|\vecd\{\widehat{\Sigma}(k)\}-\vecd\{\Sigma(k)\}|_\infty=\max_{1\leq
i,j\leq p}|\widehat{\sigma}_{i,j}(k)-\sigma_{i,j}(k)|$. For given
$k=1,\ldots,K$, Lemma 9 in 
arXiv:1410.2323 implies that
\[
\begin{split}
\pr\big[|\vecd\{\widehat{\Sigma}(k)\}-\vecd\{\Sigma(k)\}|_\infty>s\big]
\leq&~Cp^2n\exp(-Cs^{\gamma}n^{\gamma})+Cp^2n\exp(-Cs^{\tilde{\gamma}/2}n^{\tilde{\gamma}})\\
&+Cp^2\exp(-Cs^2n)+Cp^2\exp(-Cs n)
\end{split}
\]
for any $s>0$ and $ns\rightarrow\infty$. Consequently, the lemma
follows directly from the Bonferroni inequality. 
\end{proof}

\begin{lemma} \label{la:z}
Assume that Conditions {\rm\ref{as:var}--\ref{as:betam}} hold. Let
$\gamma^{-1}=2r_1^{-1}+r_2^{-1}$ and $\log p=o\{n^{\gamma/(2-\gamma)}\}$.
Then it holds under null hypothesis $H_0$ that
$
|\widehat{Z}-Z|\leq Cn^{-1/2}\log p
$
 with probability at least $1-Cp^{-1}$.
\end{lemma}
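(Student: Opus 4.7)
The plan is to reduce the bound on $|\widehat{Z}-Z|$ to a bound on $|\widehat{\mu}-\mu|_\infty$ and then exploit the special structure of $\widehat{W}$ and $W$, which are diagonal matrices. First I would apply the elementary inequality $|\max_\ell a_\ell - \max_\ell b_\ell|\leq \max_\ell |a_\ell - b_\ell|$ with $a_\ell = n^{1/2}\widehat{\mu}_\ell$ and $b_\ell = n^{1/2}\mu_\ell$ to obtain
\[
|\widehat{Z}-Z|\leq n^{1/2}|\widehat{\mu}-\mu|_\infty.
\]

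Next I would use the fact that both $\widehat{W}=\diagn\{\widehat{\Sigma}(0)\}^{-1/2}\otimes\diagn\{\widehat{\Sigma}(0)\}^{-1/2}$ and $W$ are diagonal. Consequently
\[
\widehat{\mu}-\mu = \bigl(I_K\otimes(\widehat{W}-W)\bigr)\bigl[\vecd\{\widehat{\Sigma}(1)\}^\T,\ldots,\vecd\{\widehat{\Sigma}(K)\}^\T\bigr]^\T
\]
is componentwise $(\widehat{\sigma}_i^{-1}\widehat{\sigma}_j^{-1}-\sigma_i^{-1}\sigma_j^{-1})\widehat{\sigma}_{ij}(k)$, so
\[
|\widehat{\mu}-\mu|_\infty \leq |\widehat{W}-W|_\infty \cdot \max_{1\le k\le K}|\vecd\{\widehat{\Sigma}(k)\}|_\infty.
\]

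The first factor is controlled by Lemma \ref{la:hatw}, which gives $|\widehat{W}-W|_\infty\leq C n^{-1/2}(\log p)^{1/2}$ with probability at least $1-Cp^{-1}$. For the second factor, the crucial observation is that under $H_0$ we have $\Sigma(k)\equiv 0$ for $k\ge 1$, so $\max_{k,i,j}|\widehat{\sigma}_{ij}(k)| = \max_{k}|\vecd\{\widehat{\Sigma}(k)\}-\vecd\{\Sigma(k)\}|_\infty$. Applying Lemma \ref{la:2} with $s=Cn^{-1/2}(\log p)^{1/2}$, each of the four tail terms is of order $p^{-1}$ provided $\log p = o(n^{\gamma/(2-\gamma)})$, which is exactly the assumption here; this step is where the dimension restriction enters, and is the only spot that requires any real care, as one must verify that the chosen $s$ makes $s^\gamma n^\gamma$, $s^{\tilde\gamma/2}n^{\tilde\gamma}$, $s^2 n$, and $sn$ each dominate $\log p$.

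Finally, combining the two high-probability bounds via a union bound yields
\[
|\widehat{\mu}-\mu|_\infty \leq Cn^{-1/2}(\log p)^{1/2}\cdot Cn^{-1/2}(\log p)^{1/2} = C n^{-1}\log p
\]
with probability at least $1-Cp^{-1}$, and multiplying by $n^{1/2}$ gives the claimed bound $|\widehat{Z}-Z|\leq Cn^{-1/2}\log p$. The main (mild) obstacle is simply the bookkeeping in Step 3 to confirm that the rate assumption $\log p = o(n^{\gamma/(2-\gamma)})$ indeed makes the slowest of the four exponential tail terms in Lemma \ref{la:2} negligible at the chosen scale $s\asymp n^{-1/2}(\log p)^{1/2}$.
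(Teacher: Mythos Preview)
Your proposal is correct and follows essentially the same route as the paper's proof: both reduce $|\widehat{Z}-Z|$ to the product $|\widehat{W}-W|_\infty\cdot \max_{1\le k\le K}n^{1/2}|\vecd\{\widehat{\Sigma}(k)\}|_\infty$, then invoke Lemma~\ref{la:hatw} for the first factor and Lemma~\ref{la:2} (with $\Sigma(k)=0$ under $H_0$) at scale $s\asymp n^{-1/2}(\log p)^{1/2}$ for the second. The only difference is cosmetic --- you explicitly pass through $|\widehat{\mu}-\mu|_\infty$ and justify the product bound via the diagonality of $I_K\otimes(\widehat{W}-W)$, whereas the paper states the product inequality in one line.
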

\begin{proof} Note that $ |\widehat{Z}-Z|\leq |\widehat{W}-W|_\infty\max_{1\leq k\leq K}n^{1/2}|\vecd\{\widehat{\Sigma}(k)\}|_\infty. $
By Lemma A\ref{la:2}, we have
$\max_{1\leq k\leq K}|\vecd\{\widehat{\Sigma}(k)\}|_\infty\leq Cn^{-1/2}(\log p)^{1/2}$
with probability at least $1-Cp^{-1}$ under $H_0$.
This, together with Lemma A\ref{la:hatw}, implies the required
assertion.  
\end{proof}

\begin{lemma}\label{la:dn}
Assume that Conditions {\rm\ref{as:var}--\ref{as:cov}} hold.
Let $\log p\leq Cn^{\delta}$ for some $\delta>0$.
Then it holds under $H_0$ that $ \sup_{s\in\mathbb{R}}|\pr(Z\leq s)-\pr(V\leq s)|=o(1). $
\end{lemma}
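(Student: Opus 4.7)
The goal is to establish a Gaussian approximation for the maximum of a centred high-dimensional sum of weakly dependent random vectors. Under $H_0$, $\Sigma(k)=0$ for $k\geq 1$ and hence $E(f_t)=0$ with $f_t$ as in \eqref{fft}; moreover
\[
n^{1/2}\mu=n^{-1/2}(I_K\otimes W)\sum_{t=1}^{\tilde n}f_t.
\]
Writing $Y_t=(I_K\otimes W)f_t$, one has $Z=\max_{1\leq\ell\leq p^2K}\bigl(n^{-1/2}\sum_{t=1}^{\tilde n}Y_{t,\ell}\bigr)$, and $G\sim N(0,\Xi_n)$ has covariance exactly matching that of $n^{-1/2}\sum_{t=1}^{\tilde n}Y_t$. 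The plan is to apply a high-dimensional Gaussian approximation for partial sums of $\beta$-mixing vectors in the spirit of \cite{CCK_2013}.

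The key technical device is a big-block--small-block decoupling. Partition $\{1,\ldots,\tilde n\}$ into alternating blocks $B_j\cup S_j$ of lengths $a_n$ (big) and $b_n$ (small) satisfying $a_n,b_n\to\infty$, $a_n\gg b_n$, and $a_n+b_n=o(\tilde n)$. Let $N=\lfloor\tilde n/(a_n+b_n)\rfloor$ and $U_j=\sum_{t\in B_j}Y_t$. By Berbee's coupling lemma together with Condition \ref{as:betam}, there exist independent copies $\{\tilde U_j\}_{j=1}^N$ with $\tilde U_j\stackrel{d}{=}U_j$ and
\[
\pr(U_j\neq\tilde U_j\text{ for some }j)\leq N\beta_{b_n}\leq N\exp(-C_4b_n^{r_2}),
\]
which is negligible once $b_n$ is of order $(\log n)^{1/r_2+\epsilon}$. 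The small-block contribution is bounded via a Bernstein-type inequality for mixing sequences: the exponential tails from Condition \ref{as:tail} together with the moment bounds in Condition \ref{as:cov} yield $n^{-1/2}\bigl|\sum_j\sum_{t\in S_j}Y_t\bigr|_\infty=o_p\{(\log p)^{-1/2}\}$, provided $b_n/a_n$ is chosen small enough, which is compatible with $\log p\leq Cn^\delta$ for sufficiently small $\delta$.

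The problem is thereby reduced to the maximum of a sum of $N$ independent centred $p^2K$-dimensional vectors. Apply the Gaussian approximation theorem of \cite{CCK_2013}: Condition \ref{as:cov} supplies the required lower bound on the component variances, while Condition \ref{as:tail} combined with standard block-level sub-exponential bounds delivers the moment conditions. This produces a Gaussian vector $G'$ with covariance $\Xi_n'=n^{-1}\sum_{j=1}^N\cov(\tilde U_j)$ such that the Kolmogorov distance between $\max_\ell\bigl(n^{-1/2}\sum_j\tilde U_{j,\ell}\bigr)$ and $\max_\ell G'_\ell$ tends to zero. A final comparison of $G'$ with $G\sim N(0,\Xi_n)$ via the Gaussian comparison inequality together with Nazarov's anti-concentration bound closes the argument: since the long-run covariance of $\{Y_t\}$ equals $\Xi_n$ and, by the chosen block scaling, $|\Xi_n'-\Xi_n|_\infty=o\{(\log p)^{-1}\}$, the Kolmogorov distance between $\max_\ell G'_\ell$ and $\max_\ell G_\ell$ also vanishes.

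The main obstacle is the simultaneous calibration of $a_n$ and $b_n$ so that (i) the coupling error $N\exp(-C_4b_n^{r_2})$ is negligible, (ii) the small-block contribution is $o_p\{(\log p)^{-1/2}\}$, (iii) the Gaussian approximation for the independent big blocks from \cite{CCK_2013} has error $o(1)$, and (iv) $|\Xi_n'-\Xi_n|_\infty=o\{(\log p)^{-1}\}$. The polynomial growth restriction $\log p\leq Cn^\delta$ for sufficiently small $\delta$ is precisely what allows these four requirements to be met at once.
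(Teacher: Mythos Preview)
Your proposal is correct and follows essentially the same route as the paper: the core argument is big-block/small-block decoupling, Berbee's lemma, the Gaussian approximation of \cite{CCK_2013} for the independent big blocks, and a Gaussian comparison to pass from the block covariance to $\Xi_n$ (the paper defers this part to Theorem~1 of Chang, Qiu, Yao and Zou, arXiv:1603.06663, whose proof proceeds exactly along these lines).

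One small point you gloss over: the identity $n^{1/2}\mu=n^{-1/2}(I_K\otimes W)\sum_{t=1}^{\tilde n}f_t$ is not exact, because the upper summation limit in $\widehat\Sigma(k)$ is $n-k$, not $\tilde n=n-K$, for each $k$. The paper therefore writes $\mu=n^{-1}\sum_{t=1}^{\tilde n}u_t+R_n$ and treats the passage from $Z$ to the clean statistic $\widetilde Z=\tilde n^{1/2}\max_\ell\bar u_\ell$ as a separate preliminary step, bounding $|Z-\widetilde Z|$ via the remainder $R_n$ and the normalization discrepancy $n^{1/2}-\tilde n^{1/2}$, and then absorbing this into the Kolmogorov distance with Nazarov's anti-concentration bound. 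This is routine, but it is a step that needs to be written out rather than assumed away.
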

\begin{proof}
It follow from
(\ref{eq:auto}) that  $
\mu=n^{-1}\sum_{t=1}^{\tilde{n}}u_t+R_n $, where $\tilde{n}=n-K$, each
element of $u_t$ has the form $x_{i,t+k}x_{j,t}/(\sigma_i\sigma_j)$, and
$R_n$ is the remainder term. Let $\tilde{\beta}_k$ $(k\geq1)$ be the
$\beta$-mixing coefficients generated by the process $\{u_t\}$.
Obviously, it holds that $\tilde{\beta}_k\leq \beta_{(k-K)^+}$. Define $
\bar{u}=\tilde{n}^{-1}\sum_{t=1}^{\tilde{n}}u_t\equiv(\bar{u}_1,\ldots,
\bar{u}_{p^2K})^\T$ and $\widetilde{Z}=\tilde{n}^{1/2}\max_{1\leq
\ell\leq p^2K}\bar{u}_\ell. $ In addition, let $
d_n=\sup_{s\in\mathbb{R}}|\pr(Z\leq s)-\pr(V\leq s)|$ and
$\widetilde{d}_n=\sup_{s\in\mathbb{R}}|\pr(\widetilde{Z}\leq s)-\pr(V\leq
s)|. $ We proceed the proof for $d_n=o(1)$ in two steps: (i) to show
$d_n\leq \widetilde{d}_n+o(1)$, and (ii) to prove $\widetilde{d}_n=o(1)$.

To prove (i), note that for any $s\in\mathbb{R}$ and $\varepsilon>0$,
\[
\begin{split}
\pr({Z}\leq s)-\pr(V\leq s)\leq&~ \pr(\widetilde{Z}\leq s+\varepsilon)-\pr(V\leq s+\varepsilon)+\pr(|{Z}-\widetilde{Z}|>\varepsilon)+\pr(s<V\leq s+\varepsilon)\\
\leq&~\widetilde{d}_n+\pr(|{Z}-\widetilde{Z}|>\varepsilon)+\pr(s<V\leq s+\varepsilon).
\end{split}
\]
Similarly, we can obtain the reverse inequality. Therefore,
\begin{equation}\label{eq:dn}
d_n\leq \widetilde{d}_n+\pr(|{Z}-\widetilde{Z}|>\varepsilon)+\sup_{s\in\mathbb{R}}\pr(|V-s|\leq \varepsilon).
\end{equation}
By the anti-concentration inequality of Gaussian random variables, $
\sup_{s\in\mathbb{R}}\pr(|V-s|\leq \varepsilon)\leq
C\varepsilon\{\log(p/\varepsilon)\}^{1/2}. $ It follows from the triangle
inequality and Condition \ref{as:var} that
\[
\begin{split}
|{Z}-\widetilde{Z}|\leq&~ ({n}^{1/2}-\tilde{n}^{1/2})\max_{1\leq\ell\leq p^2K}|\mu_\ell|+\tilde{n}^{1/2}\max_{1\leq \ell\leq p^2K}|\mu_\ell-\bar{u}_\ell|\\
\leq&~\frac{C}{{n}^{1/2}}\max_{1\leq k\leq K}|\vecd\{\widehat{\Sigma}(k)\}|_\infty+\frac{C}{{n}^{1/2}}|\bar{u}|_\infty+{n}^{1/2}|R_n|_\infty.
\end{split}
\]
Following the arguments of Lemma 9 of 
arXiv:1410.2323, we can show that under $H_0$,
\[
\begin{split}
\pr\bigg(\frac{C}{{n}^{1/2}}|\bar{u}|_\infty>\frac{\varepsilon}{3}\bigg)\leq&~ Cp^2n\exp(-C\varepsilon^{\gamma}n^{3\gamma/2})+Cp^2n\exp(-C\varepsilon^{\tilde{\gamma}/2}n^{5\tilde{\gamma}/4})\\
&+Cp^2\exp(-C\varepsilon^2n^2)+Cp^2\exp(-C\varepsilon n^{3/2}) ~,
\end{split}
\]
provided $n^3\varepsilon^2\rightarrow\infty$. It can also shown in the same manner
that under $H_0$, $\pr({n}^{1/2}|R_n|_\infty>{\varepsilon}/{3})$ can be also controlled by the same upper bound specified above. Now  by Lemma A\ref{la:2},  it holds under $H_0$ that
\begin{align*}
\pr(|Z-\widetilde{Z}|>\varepsilon) & \leq  Cp^2n\exp(-C\varepsilon^{\gamma}n^{3\gamma/2})+Cp^2n\exp(-C\varepsilon^{\tilde{\gamma}/2}n^{5\tilde{\gamma}/4})\\
&+Cp^2\exp(-C\varepsilon^2n^2)+Cp^2\exp(-C\varepsilon n^{3/2}).
\end{align*}
Let $\varepsilon=Cn^{-1}(\log p)^{1/2}$. 
Then \eqref{eq:dn} implies that $ d_n\leq \widetilde{d}_n+o(1). $

The proof of (ii) 
is the same as that to show $d_1=o(1)$ in the proof of Theorem 1 of
an unpublished technical report of Chang, Qiu, Yao and Zou (arXiv:1603.06663). Therefore, if $\log p\leq Cn^{\delta}$ for some $\delta>0$, we have $\widetilde{d}_n=o(1)$. This completes the proof of Lemma A\ref{la:dn}. 
\end{proof}

\subsection{Proof of Proposition \ref{tm:1}}

Following the arguments in the proof of Proposition 1 in the
supplementary file of an unpublished technical report of Chang, Zhou
and Zhou (arXiv:1406.1939), it suffices to show $
\sup_{s\in\mathbb{R}}|\pr(\widehat{Z}>s)-\pr(V>s)|=o(1) $, where
$\widehat{Z}$ and $V$ are defined in the first paragraph of Appendix.
Recall $ d_n=\sup_{s\in\mathbb{R}}|\pr(Z\leq s)-\pr(V\leq s)|. $ By
the similar arguments of \eqref{eq:dn}, it can be proved  that $
\sup_{s\in\mathbb{R}}|\pr(\widehat{Z}>s)-\pr(V>s)|
\leq d_n+\pr(|\widehat{Z}-Z|>\varepsilon)+C\varepsilon\{\log(p/\varepsilon)\}^{1/2}. $
Set $\varepsilon=Cn^{-1/2}\log p$, Lemmas A\ref{la:z} and A\ref{la:dn} yield that $ \sup_{s\in\mathbb{R}}|\pr(\widehat{Z}>s)-\pr(V>s)|=o(1). $ This completes the proof of Theorem \ref{tm:1}. 

\subsection{Proof of Theorem \ref{tm:2}}
Based on Lemma 4 
of arXiv:1603.06663 and Proposition \ref{tm:1}, we can
proceed the proof in  the same manner as the proof for Theorem 2 of
arXiv:1603.06663. 

\subsection{Proof of Theorem \ref{tm:3}}

Let $\mathcal{X}_n=\{\varepsilon_1,\ldots,\varepsilon_n\}$. Since
$G\sim N(0,\widehat{\Xi}_n)$ conditionally on $\mathcal{X}_n$, it holds that
$$E(|G|_\infty \mid\mathcal{X}_n)\leq
[1+\{2\log(p^2K)\}^{-1}]\{2\log(p^2K)\}^{1/2}\max_{1\leq \ell \leq
p^2K}\widehat{\Xi}_\ell^{1/2}, $$
where
$\widehat{\Xi}_1,\ldots,\widehat{\Xi}_{p^2K}$ are the elements in the
diagonal of $\widehat{\Xi}_n$. On the other hand, it holds $\pr\{|G|_\infty\geq
E(|G|_\infty\mid\mathcal{X}_n)+u\mid\mathcal{X}_n\}\leq
\exp\{-u^2/(2\max_{1\leq \ell \leq p^2K}\widehat{\Xi}_\ell)\}$ for any
$u>0$. Let ${\Xi}_1,\ldots,{\Xi}_{p^2K}$ be the elements in the main
diagonal of $\Xi_n$. In addition, for any $v>0$, let $
\mathcal {E}_0(v)=\{\max_{1\leq \ell\leq
p^2K}|\widehat{\Xi}_\ell^{1/2}/\Xi_\ell^{1/2}-1|\leq v \}$. Restricted on
$\mathcal{E}_0(v)$, it holds that
$$ \widehat{\cv}_\alpha\leq
(1+v)([1+\{2\log(p^2K)\}^{-1}]\{2\log(p^2K)\}^{1/2}+\{2\log(1/\alpha)\}^{1/2})\max_{1\leq
\ell\leq p^2K}\Xi_\ell^{1/2}.$$
 Let $(i_0,j_0,k_0)=\arg\max_{1\leq k\leq K}\max_{1\leq
 i,j\leq p}|\rho_{i,j}(k)|$. Without loss of generality, we
assume $\rho_{i_0,j_0}(k_0)>0$. Then, restricted on $\mathcal{E}_0(v)$, it holds that
$$
T_n\geq
n^{1/2}\widehat{\rho}_{i_0,j_0}(k_0)
\geq n^{1/2}\widehat{\sigma}_{i_0}^{-1}\widehat{\sigma}_{j_0}^{-1}
\{\widehat{\sigma}_{i_0,j_0}(k_0)-\sigma_{i_0,j_0}(k_0)\}+n^{1/2}\rho_{i_0,j_0}(k_0)(1+v)^{-2}.
$$
Choose $u$ in such a way that
$(1+v)^2[1+\{\log(p^2K)\}^{-1}+u]=1+\varepsilon_n$, for $\varepsilon_n>0$
satisfying that $\varepsilon_n\rightarrow0$ and $\varepsilon_n(\log
p)^{1/2}\rightarrow\infty$. Consequently,
$$
n^{1/2}\rho_{i_0,j_0}(k_0)\geq (1+v)^2[1+\{\log(p^2K)\}^{-1}+u]\lambda(p,\alpha)\max_{1\leq \ell\leq p^2K}\Xi_\ell^{1/2}.
$$
Following the same arguments of Lemma A\ref{la:2}, we can choose suitable $v\rightarrow0$ such that $\pr\{\mathcal{E}_0(v)^c\}\rightarrow0$. Therefore,
\[
\begin{split}
\pr(T_n>\widehat{\cv}_\alpha)\geq&~ \pr\bigg(n^{1/2}\widehat{\rho}_{i_0,j_0}(k_0)>[1+\{\log(p^2K)\}^{-1}]\lambda(p,\alpha)\max_{1\leq \ell\leq p^2K}\Xi_\ell^{1/2}\bigg)\\
\geq &~\pr\bigg[\frac{n^{1/2}\{\widehat{\sigma}_{i_0,j_0}(k_0)-\sigma_{i_0,j_0}(k_0)\}}{\widehat{\sigma}_{i_0}\widehat{\sigma}_{j_0}}>-u\lambda(p,\alpha)\max_{1\leq \ell\leq p^2K}\Xi_\ell^{1/2},~~\mathcal{E}_0(v)~\textrm{holds}\bigg]\\
\geq &~1-\pr\bigg[\frac{n^{1/2}\{\widehat{\sigma}_{i_0,j_0}(k_0)-\sigma_{i_0,j_0}(k_0)\}}{\widehat{\sigma}_{i_0}\widehat{\sigma}_{j_0}}\leq-u\lambda(p,\alpha)\max_{1\leq \ell\leq p^2K}\Xi_\ell^{1/2}\bigg]-\pr\{\mathcal{E}_0(v)^c\}.
\end{split}
\]
Notice that $u\sim \varepsilon_n$. Thus $u\lambda(p,\alpha)\max_{1\leq
\ell\leq p^2K}\Xi_\ell^{1/2}\rightarrow\infty$, which implies that $ \pr(T_n>\widehat{\cv}_\alpha)\rightarrow1. $ This completes the proof of Theorem \ref{tm:3}. 

\subsection{Proof of Theorem \ref{tm:4}}
Let $\widehat{W}^*$, $\widehat{\Sigma}^*(0)$, $\widehat{J}_n^*$ and
$\widehat{\Xi}_n^*$ be, respectively, the analogues of $\widehat{W}$,
$\widehat{\Sigma}(0)$, $\widehat{J}_n$ and $\widehat{\Xi}_n$ with
$\varepsilon_t$ replaced by $\widehat{\varepsilon}_t$. By Lemma 3.1 of
\cite{CCK_2013}, 
we only need to show
$|\widehat{\Xi}_n^*-\widehat{\Xi}_n|_\infty=o_p(1)$. Recall
$\widehat{\Xi}_n=(I_K\otimes\widehat{W})\widehat{J}_n(I_K\otimes\widehat{W})$
and $\widehat{\Xi}_n^*=(I_K\otimes\widehat{W}^*)\widehat{J}_n^*(I_K\otimes\widehat{W}^*)$,
it suffices to prove $|\widehat{W}^*-\widehat{W}|_\infty=o_p(1)$ and
$|\widehat{J}_n^*-\widehat{J}_n|_\infty=o_p(1)$.
Since the proofs for those two assertions are similar,
we only present the proof for $|\widehat{W}^*-\widehat{W}|_\infty=o_p(1)$ below. As
$\widehat{W}=[\diagn\{\widehat{\Sigma}(0)\}]^{-1/2}\otimes
[\diagn\{\widehat{\Sigma}(0)\}]^{-1/2}$ and
$\widehat{W}^*=[\diagn\{\widehat{\Sigma}^*(0)\}]^{-1/2}\otimes [\diagn\{\widehat{\Sigma}^*(0)\}]^{-1/2}$, it suffices to show $|\widehat{\Sigma}^*(0)-\widehat{\Sigma}(0)|_\infty=o_p(1)$.
Put $\widehat{\varepsilon}_t=(\widehat{\varepsilon}_{1,t},\ldots,
\widehat{\varepsilon}_{p,t})^\T$ and ${\varepsilon}_t=({\varepsilon}_{1,t},
\ldots,{\varepsilon}_{p,t})^\T$. For any $i,j$, the $(i,j)$-th element of
$\widehat{\Sigma}^*(0)-\widehat{\Sigma}(0)$ is given by
$\Delta_{i,j}=n^{-1}\sum_{t=1}^n(\widehat{\varepsilon}_{i,t}
\widehat{\varepsilon}_{j,t}-\varepsilon_{i,t}\varepsilon_{j,t})$.
Notice that
$\widehat{\varepsilon}_{i,t}=y_{i,t}-g_i(u_t;\widehat{\theta})$ and
$\varepsilon_{i,t}=y_{i,t}-g_i(u_t;\theta_0)$. It holds that
\begin{align*}
\Delta_{i,j}=&~ \frac{1}{n}\sum_{t=1}^n\{g_i(u_t;\widehat{\theta})-g_i(u_t;\theta_0)\}
\{g_j(u_t;\widehat{\theta})-g_j(u_t;\theta_0)\}\\
&  -\frac{1}{n}\sum_{t=1}^n\{g_i(u_t;\widehat{\theta})-g_i(u_t;\theta_0)\}\varepsilon_{j,t}-\frac{1}{n}\sum_{t=1}^n\varepsilon_{i,t}\{g_j(u_t;\widehat{\theta})-g_j(u_t;\theta_0)\}.
\end{align*}
It follows from Cauchy--Schwarz inequality that
\begin{equation}\label{eq:bound}
\begin{split}
\Delta_{i,j}^2\leq&~ 3\bigg[\frac{1}{n}\sum_{t=1}^n\{g_i(u_t;\widehat{\theta})-g_i(u_t;\theta_0)\}^2\bigg]\bigg[\frac{1}{n}\sum_{t=1}^n\{g_j(u_t;\widehat{\theta})-g_j(u_t;\theta_0)\}^2\bigg]\\
&+3\bigg[\frac{1}{n}\sum_{t=1}^n\{g_i(u_t;\widehat{\theta})-g_i(u_t;\theta_0)\}^2\bigg]\bigg(\frac{1}{n}\sum_{t=1}^n\varepsilon_{j,t}^2\bigg)\\
&+3\bigg[\frac{1}{n}\sum_{t=1}^n\{g_j(u_t;\widehat{\theta})-g_j(u_t;\theta_0)\}^2\bigg]\bigg(\frac{1}{n}\sum_{t=1}^n\varepsilon_{i,t}^2\bigg).
\end{split}
\end{equation}
By Condition \ref{as:tes}, it holds uniformly for any $i=1,\ldots,p$ that
\begin{align*}
{1 \over n}\sum_{t=1}^n\{g_i(u_t;\widehat{\theta})-g_i(u_t;\theta_0)\}^2 &
\leq |\widehat{\theta}-\theta_0|_*^2\bigg\{\frac{2}{n}\sum_{t=1}^nM_i^2(u_t)\bigg\}+\frac{2}{n}\sum_{t=1}^nR_i^2(u_t;\widehat{\theta},\theta_0)\\
& =O_p(\zeta_n^2\varphi_{1,n}+\varphi_{2,n}).
\end{align*}
On the other hand, Lemma A\ref{la:2} implies that $\sup_{1\leq i\leq
p}n^{-1}\sum_{t=1}^n\varepsilon_{i,t}^2=O_p(1)$. This together with \eqref{eq:bound} imply that
 $\Delta_{ij}^2=O_p(\zeta_n^2\varphi_{1,n}+\varphi_{2,n})$  uniformly for any $i,j=1,\ldots,p$. Thus $|\widehat{\Sigma}^*(0)-\widehat{\Sigma}(0)|_\infty=O_p(\zeta_n\varphi_{1,n}^{1/2}+\varphi_{2,n}^{1/2})=o_p(1)$. This completes the proof of Theorem \ref{tm:4}.

\bibliographystyle{biometrika}

\end{document}